\documentclass[11pt]{article}

\usepackage[margin=1in]{geometry}
\usepackage{graphicx}
\usepackage{amsmath}
\usepackage{amsthm}
\usepackage{newtxtext}
\usepackage{newtxmath}
\usepackage{natbib}
\usepackage{bm}
\usepackage{subcaption}
\usepackage{tabularx}
\usepackage{adjustbox}
\usepackage{booktabs}
\usepackage{array}
\usepackage{multirow}
\usepackage{siunitx}
\usepackage{enumitem}
\usepackage{xcolor}
\usepackage{placeins}
\usepackage{float}
\usepackage[capitalise]{cleveref}
\providecommand{\texorpdfstring}[2]{#1}
\providecommand{\href}[2]{#2}
\crefname{appsec}{Appendix}{Appendices}
\Crefname{appsec}{Appendix}{Appendices}

\graphicspath{{figures/}{./}}
\newcommand{\Kn}{\mathrm{Kn}}
\newcommand{\dd}{\mathrm{d}}
\newcommand{\bx}{\bm{x}}
\newcommand{\bv}{\bm{v}}
\newcommand{\bc}{\bm{c}}

\newcommand{\M}{\mathcal{M}}
\newcommand{\R}{\mathbb{R}}
\newcommand{\norm}[1]{\left\lVert #1\right\rVert}
\newtheorem{lemma}{Lemma}

\title{Tail observability and fourth-order closure recovery in physics-informed neural networks for Bhatnagar--Gross--Krook normal shocks}

\author{Ehsan Roohi\\
\small Department of Mechanical and Industrial Engineering, University of Massachusetts Amherst, Amherst, MA, USA\\
\small \texttt{roohie@umass.edu}}
\date{}

\begin{document}
\maketitle

\begin{abstract}
Closure-level accuracy in neural kinetic shock solvers is not guaranteed by accurate density, velocity and temperature profiles, because the relevant observables are velocity-weighted projections of the nonequilibrium distribution. We study this observability problem for one-dimensional Bhatnagar--Gross--Krook (BGK) shock waves using a positive macro--micro physics-informed neural network (PINN) in which the distribution is represented as a local Maxwellian multiplied by a bounded exponential correction. Independent discrete-velocity method (DVM) references are used for validation. Shock-tube tests show that sparse joint anchoring of heat flux and normal stress stabilises the primary nonequilibrium layer, whereas residual-only, macro-only and single-moment variants fail in distinct ways. In a stationary Mach-2 normal shock, a flux-locked compact model recovers $\rho$, $u_x$, $T$, $q_x$, $\sigma_{xx}$ and $m_{xxx}^{cl}$, but leaves $R_{xx}^{cl}$ with order-unity error. DVM diagnostics show that $R_{xx}^{cl}$ is controlled by a sign-changing, tail-weighted cancellation weakly observed by lower moments. A shock-local closure correction aligned with this missing projection reduces the relative $R_{xx}^{cl}$ error to $1.12\times10^{-1}$ while preserving the lower moments. A common-initialisation ablation shows that optional distribution-function probe losses are diagnostic rather than constitutive. A supplementary DVM--PINN comparison for the scalar fourth-order excess $\Delta$ shows that the obstruction is anisotropic, sign-changing tail weighting rather than fourth-order polynomial degree alone.
\end{abstract}

\section{Introduction}
\label{sec:introduction}

The structure of a rarefied shock is a velocity-space object.  In continuum descriptions the shock is represented through a small number of fields, but in the kinetic description the state is the velocity distribution function $f(t,\bx,\bv)$ and the moments of interest are projections of its nonequilibrium part onto velocity kernels of increasing order.  This distinction is essential in hypersonic rarefaction, vacuum systems, and micro-scale gas dynamics, where stress, heat flux, and closure-level moments are not passive diagnostics but carry the information needed to close macroscopic descriptions \citep{Cercignani1988,Sone2007,Bird1994,KarniadakisBeskok2001}.  The Bhatnagar--Gross--Krook (BGK) model \citep{Bhatnagar1954} is a useful setting in which to examine this issue.  It retains the nonlinear coupling between the distribution and its local Maxwellian while replacing the full Boltzmann collision integral by a relaxation operator, and it has therefore become a standard testbed for deterministic velocity schemes, asymptotic-preserving methods, and neural kinetic solvers \citep{Mieussens2000,FilbetJin2010,DimarcoPareschi2014,Jin1999,XuHuang2010}.

A central difficulty for physics-informed neural networks (PINNs) applied to kinetic equations is that small velocity-space errors need not be small moment errors. A perturbation in the high-peculiar-speed tail can have a weak standard $L^2$ signature while carrying a large weighted contribution to heat flux, stress, or a closure moment. The original PINN--BGK formulation demonstrated forward and inverse rarefied-flow prediction through the Boltzmann--BGK equation \citep{LouMengKarniadakis2021PINNBGK}. Subsequent BGK-focused work sharpened the two issues that matter here: weighted-loss analyses show that controlling an unweighted residual leaves moment errors under-constrained when the loss does not see the relevant velocity weights \citep{Ko2026WeightedBGK}, while separable architectures, Maxwellian splitting, and Gaussian velocity factors improve the efficiency and moment accuracy of high-dimensional neural BGK solvers \citep{Oh2025SPINNBGK}. Structure- and asymptotic-preserving neural surrogates further show that carefully selected kinetic information can be more valuable than large amounts of unstructured data \citep{ChenDimarcoPareschi2025SAPNN}. These developments motivate the central question addressed here: which moment subspaces are made observable by the finite residual, boundary and sparse-moment information supplied to a neural kinetic solver? 

A related distinction between field reconstruction and closure-variable recovery appears in physics-informed learning for turbulent and experimental flows.  In Reynolds-averaged Navier--Stokes (RANS) problems, \citet{EivaziTahaniSchlatterVinuesa2022RANSPINN} used PINNs not only to reconstruct mean-flow quantities, but also to assess the consistency of Reynolds-stress components, which are the closure-level variables of the RANS system.  In that setting, the Reynolds stresses can be introduced as explicit unknown fields whose divergence appears directly in the momentum residual; the governing equations therefore provide a physics channel through which these closure variables are observed.  Physics-informed reconstruction from sparse or noisy experimental data likewise emphasises that governing equations should constrain physically consistent fields rather than generic interpolants \citep{EivaziWangVinuesa2024ExperimentalPINN}.  The kinetic-shock analogue is developed below after the relevant closure projections have been defined: here the difficult variable is not an explicit field in the governing equation, but a velocity-weighted projection of the nonequilibrium distribution.

This observability viewpoint is also consistent with recent neural modelling work in kinetic and rarefied-gas dynamics beyond the original PINN--BGK setting.  Asymptotic-preserving and micro--macro neural formulations show that kinetic learning is most robust when the neural variables and losses respect the limiting structure of the transport equation rather than approximating the distribution as a generic field \citep{JinMaWu2021APNNTransport,JinMaWu2023APNNKinetic,LiuZhuZhu2026BiAPNN}.  Related learned-collision studies for direct simulation Monte Carlo (DSMC) and direct molecular simulation (DMS) likewise emphasise the need to preserve invariants, stochasticity and transport-relevant quantities rather than treating kinetic states as generic pointwise fields \citep{BallMacArtSirignano2026MLDMS,RoohiShojaSaniAzghadi2026PoF,RoohiShojaSaniStefanov2026PoF}.  These developments support the central premise of the present study: for rarefied shocks, the learned quantity must be physically projected and transport-aware, especially when the target observable is a tail-sensitive closure moment. 

This paper develops that observability view for BGK shock waves.  The neural ansatz is positive by construction, writing $f_\theta=M_\theta\exp(\psi_\theta)$ with a bounded Hermite-like correction.  The low-order shock-tube model exposes heat-flux and stress modes explicitly, and the training loss combines the BGK residual with sparse macroscopic locking and sparse joint anchors for $q_x$ and $\sigma_{xx}$.  These anchors are integrated moment probes at a small number of shock-layer locations.  They make selected nonequilibrium moment directions visible to the optimiser while the PDE residual, boundary conditions, and positivity determine the distribution.

The stationary normal shock is used to push the argument beyond the usual five macroscopic and nonequilibrium profiles.  A Mach-2 shock in monatomic gas is computed independently with a conservative discrete-velocity method (DVM) in upstream mean-free-path coordinates.  The conservative DVM constructs a discrete Maxwellian whose quadrature moments match the target density, velocity, and temperature on the finite velocity grid, thereby avoiding plateau errors that would otherwise contaminate the shock-frame flux balance.  This reference supplies not only $\rho$, $u_x$, $T$, $q_x$, and $\sigma_{xx}$, but also closure-level quantities such as
\[
\begin{aligned}
    m_{xxx}^{cl}&=\int\left(c_x^3-\frac{3}{5}c_x|\bc|^2\right)(f-M_d)\,\dd\bv,\\
    R_{xx}^{cl}&=\int |\bc|^2\left(c_x^2-\frac{|\bc|^2}{3}\right)(f-M_d)\,\dd\bv-7T\sigma_{xx}.
\end{aligned}
\]
The superscript ``cl'' denotes closure-level: \(m_{xxx}^{cl}\) and \(R_{xx}^{cl}\) are the nonequilibrium higher-order projections relevant to moment closure, rather than raw central moments.  They are evaluated relative to the conservative discrete Maxwellian \(M_d\), so that they vanish in the Maxwellian plateaus and isolate the shock-layer closure information. Here $\boldsymbol{c}=\boldsymbol{v}-u_x\boldsymbol{e}_x$ is the peculiar velocity and $M_d$ denotes the conservative discrete Maxwellian associated with the local DVM moments $(\rho,u_x,T)$ on the same finite velocity grid as $f$.  It is not the continuous analytic Maxwellian evaluated pointwise; rather, $M_d$ is constructed so that its discrete quadrature moments exactly match the prescribed density, momentum and temperature on the DVM grid.  Thus $f-M_d$ represents the nonequilibrium part measured relative to the grid-consistent local equilibrium, and the closure moments below are not contaminated by finite-velocity-grid plateau errors. These moments test whether a neural kinetic representation has learned the velocity-space tail structure, not merely the low-order shock profile.

The closure variables $m_{xxx}^{cl}$ and $R_{xx}^{cl}$ give a stringent test of whether the neural representation contains the velocity-space information required by reduced kinetic closures and higher-order moment models.  In the moment-equation hierarchy, these quantities are not arbitrary diagnostics.  The balance equation for the stress tensor contains the divergence of a third-order moment $m_{ijk}$, while the balance equation for the heat flux contains the divergence of a fourth-order tensorial moment $R_{ij}$.  Thus, already from the regularised 13-moment (R13) viewpoint, $R_{ij}$ represents the closure information needed to close the heat-flux transport equation; R13 supplies a constitutive regularisation for this term, whereas regularised 26-moment (R26) formulations promote $m_{ijk}$, $R_{ij}$ and the scalar excess $\Delta$ to independent closure-level variables \citep{Grad1949,StruchtrupTorrilhon2003,TorrilhonStruchtrup2004,Struchtrup2005,Torrilhon2016,GuEmerson2009R26}.  In a planar normal shock, the relevant component is $R_{xx}$, which is precisely the fourth-order projection tested below.  This motivation is reinforced by the benchmark study of \citet{FeiLiuLiuZhang2020ShockBenchmark}, who showed for kinetic shock waves that high-order moments beyond shear stress and heat flux, including $m_{xxx}$ and $R_{xx}$, distinguish kinetic models in the transition regime.  A related DSMC assessment by \citet{ShojaSaniRoohiStefanov2021SBTGBT} also used stationary shock waves to evaluate the sensitivity of higher-order velocity moments, including $m_{xxx}$ and $R_{xx}$, to collision-partner selection schemes.  We therefore use $R_{xx}^{cl}$ as a certification observable for closure-aware neural kinetics: a solver that reproduces $\rho$, $u_x$, $T$, $q_x$ and $\sigma_{xx}$ but misses this projection has captured the low-order shock profile without resolving the heat-flux-closure information carried by the velocity tail.

The work is also connected to the older moment-closure and rarefied-shock literature. Grad-type expansions, regularised moment systems, and entropy-based closures all make explicit that different velocity moments correspond to different observable subspaces of the distribution \citep{Grad1949,Struchtrup2005,Torrilhon2016,GuEmerson2009R26,Levermore1996}. In deterministic kinetic computation, discrete-ordinate and conservative velocity methods face the complementary question of whether the finite quadrature represents the Maxwellian plateaus and tail-weighted moments accurately enough \citep{Tcheremissine2006,RjasanowWagner2005,ChapmanCowling1970,Hirschfelder1954}. The present study combines these two viewpoints: the DVM provides a high-fidelity kinetic reference, while the PINN is used to test which portions of that reference are observable from sparse residual and moment information.

These definitions set up the main hypothesis of the paper.  The question is not only whether a positive BGK--PINN can reproduce the visible shock profiles, but whether the finite loss makes the closure-level velocity projections observable.  We therefore use the stationary normal shock as a controlled test of three increasingly stringent levels of information: lower macroscopic fields, lower nonequilibrium moments, and tail-sensitive closure moments.  The results below separate these levels by combining DVM reference diagnostics, common-initialisation ablations, and a shock-local closure correction.  The goal is not to claim a universal constitutive law for all shocks, but to identify which additional projection is needed when a compact neural kinetic representation has learned the low-order shock while still missing the fourth-order heat-flux-closure information.

The paper makes four contributions to this closure-observability problem: (i) it formulates closure recovery in BGK shocks as a finite-loss observability problem; (ii) it shows, through shock-tube ablations, how joint heat-flux and stress anchoring stabilises the primary nonequilibrium layer; (iii) it uses refined DVM velocity-space diagnostics to identify the signed tail cancellation that controls the fourth-order observable; and (iv) it uses closure-head, data-removal, and no-distribution-supervision ablations to separate residual weighting, sparse closure probes, and direct phase-space supervision. The resulting picture is task-dependent: residual and low-order moment agreement certify only the observed projections, while a physically aligned closure degree of freedom tests the projection that remains weakly observed. The remainder of the paper states the BGK problem, describes the DVM references, introduces the positive macro--micro representation and observability argument, reports shock-tube and stationary-shock results, and uses ablations to separate residual, macroscopic, nonequilibrium, and tail-closure effects.

\section{BGK shock-tube model}
\label{sec:bgk}

We consider a monatomic gas with one spatial coordinate and three molecular velocity components. All quantities are nondimensional. The gas constant, molecular mass, and reference temperature scale are absorbed into the nondimensional variables, and the translational degrees of freedom are three. The nondimensional BGK equation is
\begin{equation}
    \partial_t f + v_x\partial_x f = \frac{1}{\Kn}\left(\M[f]-f\right),
    \qquad x\in[-0.5,0.5],\quad \bv=(v_x,v_y,v_z)\in\R^3,
    \label{eq:bgk}
\end{equation}
where $\Kn$ is the nondimensional relaxation time in the present fixed-collision-frequency scaling. The local Maxwellian is
\begin{equation}
    \M_{\rho,u,T}(\bv)
    =\frac{\rho}{(2\pi T)^{3/2}}
      \exp\left[-\frac{|\bv-u\bm{e}_x|^2}{2T}\right],
    \label{eq:maxwellian}
\end{equation}
with density, streamwise velocity, and temperature obtained from moments of $f$:
\begin{align}
    \rho &= \int_{\R^3} f\,\dd\bv,\label{eq:rho}\\
    u &= \frac{1}{\rho}\int_{\R^3} v_x f\,\dd\bv,\label{eq:u}\\
    T &= \frac{1}{3\rho}\int_{\R^3}|\bv-u\bm{e}_x|^2 f\,\dd\bv.\label{eq:T}
\end{align}
For a monatomic gas, the corresponding equilibrium pressure is $p=\rho T$ and the ratio of specific heats is $\gamma=5/3$. The peculiar velocity is
\begin{equation}
    \bc = \bv-u\bm{e}_x.
\end{equation}
The nonequilibrium heat flux and normal deviatoric stress used for validation are
\begin{align}
    q_x &= \frac{1}{2}\int_{\R^3}|\bc|^2 c_x f\,\dd\bv,\label{eq:qx}\\
    \sigma_{xx} &= \int_{\R^3}\left(c_x^2-\frac{|\bc|^2}{3}\right)f\,\dd\bv.\label{eq:sigmaxx}
\end{align}
The sign convention in \eqref{eq:sigmaxx} follows the deviatoric second central moment used in the computations. These two quantities vanish for a local Maxwellian and therefore directly measure the nonequilibrium component of the shock layer.

The initial condition is a smoothed Riemann state,
\begin{equation}
\begin{aligned}
    (\rho,u,T)(x,0)&=(1-H_k(x))(\rho_L,u_L,T_L)+H_k(x)(\rho_R,u_R,T_R),\\
    H_k(x)&=\frac{1}{2}\left(1+\tanh(kx)\right).
\end{aligned}
    \label{eq:initial}
\end{equation}
with $k=20$ unless otherwise stated. The distribution is initialized as a Maxwellian built from these macroscopic states. The primary Sod-type state is
\begin{equation}
    (\rho_L,u_L,T_L)=(1,0,1),\qquad
    (\rho_R,u_R,T_R)=(0.125,0,0.8).
    \label{eq:sod_state}
\end{equation}
A milder shock case uses the same left state but $(\rho_R,u_R,T_R)=(0.375,0,0.8)$. Inflow boundary distributions are Maxwellians corresponding to the left state for $v_x>0$ at $x=-0.5$ and to the right state for $v_x<0$ at $x=0.5$.

\section{Discrete-velocity reference solver}
\label{sec:dvm}

The validation reference is generated by an independent discrete-velocity method (DVM). The velocity space is truncated to
\begin{equation}
    v_x,v_y,v_z\in[-v_{\max},v_{\max}],\qquad v_{\max}=7,
\end{equation}
with a tensor-product trapezoidal grid. Unless otherwise stated, the paper cases use
\begin{equation}
    N_{v_x}\times N_{v_y}\times N_{v_z}=32\times 12\times 12,
    \qquad N_v=4608.
\end{equation}
For the shock-tube reference profiles, the spatial grid contains $N_x=700$ finite-volume cells. Let $f_i^n(\bv_j)$ denote the discrete distribution in cell $i$, velocity node $j$, and time level $n$. Transport is advanced by a first-order upwind step,
\begin{equation}
    f_{i,j}^{*}=f_{i,j}^{n}-\Delta t\,v_{x,j}\,D_x^{\mathrm{up}} f_{i,j}^{n},
    \label{eq:dvm_transport}
\end{equation}
with inflow boundary values supplied by the left and right Maxwellians. The time step satisfies
\begin{equation}
    \Delta t = \mathrm{CFL}\,\frac{\Delta x}{v_{\max}},
    \qquad \mathrm{CFL}=0.42,
\end{equation}
up to final-time adjustment. After transport, macroscopic moments of $f^*$ are evaluated by quadrature and the BGK relaxation is applied exactly over the time step:
\begin{equation}
    f_{i,j}^{n+1}
    = \M_{i,j}^{*}+\left(f_{i,j}^{*}-\M_{i,j}^{*}\right)\exp(-\Delta t/\Kn),
    \label{eq:dvm_relax}
\end{equation}
where $\M^*$ is the Maxwellian formed from the transported moments. Equation~\eqref{eq:dvm_relax} is preferable to an explicit Euler collision update because it preserves the correct relaxation timescale and improves stability at smaller Knudsen numbers. Positivity is enforced numerically by clipping only at machine-level floors after each substep. The DVM is not used as a dense training target; dense DVM profiles are reserved for validation, while the neural solver sees only sparse integrated moment values in selected ablation configurations. The discretisation verification of the DVM reference calculations is reported in \cref{app:dvm_verification}.

\section{Positive macro--micro neural solver}
\label{sec:method}

\subsection{Network inputs and macro--micro representation}

The neural representation is designed around three requirements: positivity of $f$, explicit access to leading nonequilibrium moments, and stable macroscopic profiles. Define the normalised time $\tau=t/t_f$. The network input feature vector is
\begin{equation}
\begin{aligned}
    \Phi(t,x)=&\,[\tau, x, \sin(2\pi x),\cos(2\pi x),\sin(4\pi x),\cos(4\pi x),\sin(\pi\tau),\cos(\pi\tau)].
\end{aligned}
    \label{eq:features}
\end{equation}
A fully connected neural network with four hidden layers, 128 neurons per layer, and SiLU activations maps $\Phi$ to five scalar outputs,
\begin{equation}
    (r_\rho,r_u,r_T,r_q,r_\sigma)=\mathcal{N}_\theta(\Phi(t,x)).
\end{equation}
The first three outputs parameterise the macroscopic state with hard initial conditioning,
\begin{align}
    \rho_\theta(t,x) &= \rho_0(x)\exp\{a_\rho\tau\tanh r_\rho(t,x)\},\label{eq:rho_net}\\
    u_\theta(t,x) &= u_0(x)+a_u\tau\tanh r_u(t,x),\label{eq:u_net}\\
    T_\theta(t,x) &= T_0(x)\exp\{a_T\tau\tanh r_T(t,x)\},\label{eq:T_net}
\end{align}
where $(\rho_0,u_0,T_0)$ are the smoothed initial fields. In the shock-tube runs reported here, $(a_\rho,a_u,a_T)=(0.45,0.42,0.35)$. The two nonequilibrium amplitudes are shock-localised:
\begin{align}
    A_q(t,x) &= \tau\,G_w(x)\,a_q\tanh r_q(t,x),\label{eq:Aq}\\
    A_\sigma(t,x) &= \tau\,G_w(x)\,a_\sigma\tanh r_\sigma(t,x),\label{eq:As}\\
    G_w(x)&=\exp[-(x/w)^4],\label{eq:gate}
\end{align}
with $w=0.13$ and $a_q=a_\sigma=0.25$ in the shock-tube cases.

The distribution is reconstructed as a positive Maxwellian correction,
\begin{equation}
    f_\theta(t,x,\bv)=\M_{\rho_\theta,u_\theta,T_\theta}(\bv)\exp[\psi_\theta(t,x,\bv)].
    \label{eq:f_positive}
\end{equation}
Let $\bc=(c_x,c_y,c_z)=\bv-u_\theta\bm{e}_x$ and $\widehat{\bc}=\bc/\sqrt{T_\theta}$. The correction is a bounded Hermite-like expansion,
\begin{equation}
    \psi_\theta = \psi_{\max}\tanh\left(\frac{A_q\phi_q(\widehat{\bc})+A_\sigma\phi_\sigma(\widehat{\bc})}{\psi_{\max}}\right),
    \label{eq:psi}
\end{equation}
with
\begin{align}
    \phi_q(\widehat{\bc}) &= \frac{\left(\frac{1}{2}|\widehat{\bc}|^2-\frac{5}{2}\right)\widehat{c}_x}{8},\label{eq:phi_q}\\
    \phi_\sigma(\widehat{\bc}) &= \frac{\widehat{c}_x^2-|\widehat{\bc}|^2/3}{4}.\label{eq:phi_sig}
\end{align}
The hyperparameter $\psi_{\max}=0.65$ limits the exponent and prevents unphysical high-velocity amplification. This form differs from a linear Grad-type perturbation $f=M(1+h)$: the exponential representation preserves $f_\theta>0$ for all velocities and was found essential for stable shock training.

\subsection{Training objective}

The training loss is
\begin{equation}
\begin{split}
    \mathcal{L} =&\lambda_{pde}\mathcal{L}_{pde}
    +\lambda_{rel}\mathcal{L}_{pde}^{rel}
    +\lambda_{bc}\mathcal{L}_{bc}
    +\lambda_{pos}\mathcal{L}_{pos} \\
    &+\lambda_{M}\mathcal{L}_{macro}
    +\lambda_q\mathcal{L}_{q}
    +\lambda_\sigma\mathcal{L}_{\sigma}.
\end{split}
\label{eq:loss}
\end{equation}
The PDE residual is
\begin{equation}
    \mathcal{R}_\theta
    =\partial_t f_\theta+v_x\partial_x f_\theta-\frac{1}{\Kn}(\M[f_\theta]-f_\theta),
    \label{eq:residual}
\end{equation}
where $\M[f_\theta]$ is evaluated using the moments implied by the neural representation. The main residual loss uses a velocity weight
\begin{equation}
    W(\bv)=1+\alpha |\bv|^\beta,
    \qquad \alpha=0.1,
    \qquad \beta=4,
    \label{eq:vel_weight}
\end{equation}
which is consistent with the idea that high-velocity errors must be penalised more strongly in BGK moment calculations \citep{Ko2026WeightedBGK}. The relative residual term normalises by $|f_\theta|+|\M[f_\theta]|+\epsilon_{rel}$ and improves conditioning near low-density or tail regions.

The boundary loss imposes inflow Maxwellians for $v_x>0$ at the left boundary and $v_x<0$ at the right boundary. The positivity penalty is inactive for most successful runs because \eqref{eq:f_positive} already enforces positivity; it is retained as a numerical guard on the bounded exponential factor. The macroscopic lock is
\begin{equation}
\begin{aligned}
    \mathcal{L}_{macro}=\frac{1}{N_M}\sum_{m=1}^{N_M}\Bigg[&
    \left(\frac{\rho_\theta(x_m)-\rho^{ref}(x_m)}{\|\rho^{ref}\|_\infty}\right)^2
    +w_u\left(\frac{u_\theta(x_m)-u^{ref}(x_m)}{\|u^{ref}\|_\infty+10^{-4}}\right)^2\\
    &+\left(\frac{T_\theta(x_m)-T^{ref}(x_m)}{\|T^{ref}\|_\infty}\right)^2\Bigg].
\end{aligned}
    \label{eq:macro_lock}
\end{equation}
with $w_u=12$ unless otherwise stated. The nonequilibrium moment anchors are
\begin{align}
    \mathcal{L}_{q} &= \frac{1}{N_A}\sum_{a=1}^{N_A}\left(\frac{q_{x,\theta}(x_a)-q_x^{ref}(x_a)}{S_q+s_q}\right)^2,\label{eq:q_anchor}\\
    \mathcal{L}_{\sigma} &= \frac{1}{N_A}\sum_{a=1}^{N_A}\left(\frac{\sigma_{xx,\theta}(x_a)-\sigma_{xx}^{ref}(x_a)}{S_\sigma+s_\sigma}\right)^2,\label{eq:sig_anchor}
\end{align}
where $S_q$ and $S_\sigma$ are case-dependent normalising amplitudes and $s_q,s_\sigma$ are small floors. The important point is that \eqref{eq:q_anchor}--\eqref{eq:sig_anchor} use only integrated moments at a sparse set of shock-layer points. They do not train the full velocity distribution $f(x,\bv)$.

The standard full run uses $N_A=42$ moment anchor locations and $N_M=160$ macroscopic lock locations. The low-data ablation study reduces these numbers as far as $N_A=4$--16 and $N_M=32$. For perspective, the full DVM distribution for a shock-tube reference with $N_x=700$ and $N_v=4608$ contains $3.2\times 10^6$ phase-space values. The low-data configuration $N_A=16$, $N_M=32$ uses approximately 128 scalar moment or macro values, less than $10^{-4}$ of the full phase-space degrees of freedom. The dense DVM profiles are used only for validation and error reporting.

\subsection{Moment observability and why sparse anchors are needed}
\label{subsec:observability}

It is important to distinguish two statements. At the level of the exact BGK equation, no moment anchor is needed: if $f$ is the exact solution, every moment of $f$ is determined. At the level of a finite neural approximation trained with finite quadrature and finitely sampled residual and moment losses, the situation is different. Low-order moment accuracy does not imply high-order moment accuracy unless the loss controls the corresponding high-order velocity-weighted component of the distribution error. This motivates the use of sparse moment probes as observability constraints rather than as dense supervised fitting.

Let $e(\bv)=f_\theta(\bv)-f(\bv)$ denote a local velocity-space error at a fixed physical point. The low-order constraints used in the shock-tube loss act through a finite collection of velocity test functions,
\begin{equation}
    \psi_i(\bv)\in\{1,v_x,|\bv|^2,c_x|\bc|^2,c_x^2-|\bc|^2/3,\ldots\},
\end{equation}
where the first group corresponds to mass, momentum, and energy and the latter group corresponds to heat flux and stress. Higher-order closure quantities involve different kernels, for example
\begin{align}
    \phi_m(\bv) &= c_x^3-\frac{3}{5}c_x|\bc|^2,\label{eq:phi_mxxx}\\
    \phi_R(\bv) &= |\bc|^2\left(c_x^2-\frac{|\bc|^2}{3}\right),\label{eq:phi_Rxx}
\end{align}
up to the additional Grad-type subtraction used in the closure residual $R_{xx}^{cl}$.

\begin{lemma}[Finite moment constraints do not control unobserved closure moments]
\label{lem:observability}
Let $V$ be a velocity-space function space with inner product $\langle a,b\rangle_w=\int a(\bv)b(\bv)w(\bv)\,\dd\bv$, and let $\Psi=\mathrm{span}\{\psi_1,\ldots,\psi_N\}$ be the finite span of all velocity kernels directly observed by a loss. If a higher-order moment kernel $\phi\notin\Psi$, then there exists a nonzero error $e\in V$ such that
\begin{equation}
    \langle e,\psi_i\rangle_w=0,\qquad i=1,\ldots,N,
\end{equation}
but
\begin{equation}
    \langle e,\phi\rangle_w\neq 0.
\end{equation}
Consequently, exact agreement in all observed low-order moment functionals does not imply exact agreement of the unobserved higher-order moment associated with $\phi$.
\end{lemma}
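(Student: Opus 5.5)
This is a finite-dimensional orthogonal-projection argument; the plan is to take the error to be the part of $\phi$ that the observed kernels cannot see. I will assume that $V$ contains $\phi$ and $\psi_1,\ldots,\psi_N$, and that $\langle\cdot,\cdot\rangle_w$ is a genuine inner product on $V$: $w>0$ almost everywhere and the weighted integrals converge. For the Maxwellian-type weights used here this holds, since all polynomial kernels are square-integrable against a Gaussian weight. The hypothesis $\phi\notin\Psi$ is read as $\phi\notin\mathrm{span}\{\psi_1,\ldots,\psi_N\}$ inside $V$. Note that the $\psi_i$ need not be linearly independent.

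\emph{Step 1: project onto $\Psi$.} Since $\Psi$ is finite-dimensional, it is a closed subspace of $V$. Hence the orthogonal projection $P_\Psi\phi\in\Psi$ exists, defined by $\langle \phi-P_\Psi\phi,\psi\rangle_w=0$ for all $\psi\in\Psi$. To compute it concretely, I pick a basis of $\Psi$ and solve the Gram system $G a=b$, where $G_{ij}=\langle\psi_i,\psi_j\rangle_w$ and $b_i=\langle\phi,\psi_i\rangle_w$. Restricted to a linearly independent subset, $G$ is positive definite, so the system is uniquely solvable.

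\emph{Step 2: define the error.} Set $e=\phi-P_\Psi\phi$. Then $e\neq 0$, because otherwise $\phi=P_\Psi\phi\in\Psi$, contradicting the hypothesis. By construction, $\langle e,\psi_i\rangle_w=0$ for every $i$.

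\emph{Step 3: check the unobserved moment.} Since $P_\Psi\phi\in\Psi$ and $e$ is orthogonal to $\Psi$,
\begin{equation*}
\langle e,\phi\rangle_w=\langle e,e\rangle_w+\langle e,P_\Psi\phi\rangle_w=\|e\|_w^2>0.
\end{equation*}

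\emph{Consequence.} If $f$ is any reference distribution and $f_\theta=f+\varepsilon e$ for any $\varepsilon\neq0$, then $f_\theta$ and $f$ agree exactly in every observed functional $\langle\cdot,\psi_i\rangle_w$. However, their $\phi$-moments differ by $\varepsilon\|e\|_w^2\neq0$.

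The only delicate step is the functional-analytic setup, not the algebra. The argument needs $\langle\cdot,\cdot\rangle_w$ to be definite on $V$, so that $e\neq0$ forces $\|e\|_w>0$, and it needs $\phi\in V$. I will state these as standing assumptions.

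There is also a physical caveat. If one further requires $f_\theta\geq0$, then the perturbation $\varepsilon e$ must be small relative to $f$ wherever $f$ is small. This is achieved by choosing $\varepsilon$ small, provided $e/f$ is bounded. Because $e$ is polynomial, this is not automatic for a Maxwellian $f$, so I would truncate the velocity domain as in the DVM grid with $v_{\max}=7$. On that compact domain $|e|/f$ is bounded, and the positivity caveat holds for small $\varepsilon$. Since the lemma itself does not require positivity, I would mention this only as a remark.
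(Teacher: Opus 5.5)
Your proposal is correct and follows the paper's proof essentially verbatim: set $e=\phi-P_\Psi\phi$, note that $e\neq0$ because $\phi\notin\Psi$, and use $e\perp\Psi$ to get $\langle e,\phi\rangle_w=\|e\|_w^2\neq0$. Your standing assumptions ($\phi\in V$, definiteness of $\langle\cdot,\cdot\rangle_w$) and the Gram-system construction of the projection make explicit what the paper leaves implicit, and your positivity remark goes beyond the lemma.
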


\begin{proof}
Let $P_\Psi\phi$ be the orthogonal projection of $\phi$ onto $\Psi$ and set $e=\phi-P_\Psi\phi$. Because $\phi\notin\Psi$, $e\neq 0$. By construction, $e$ is orthogonal to every $\psi_i$, so all observed moment errors vanish. However,
\begin{equation}
    \langle e,\phi\rangle_w
    =\langle \phi-P_\Psi\phi,\phi\rangle_w
    =\langle e,e+P_\Psi\phi\rangle_w
    =\norm{e}_w^2\neq 0,
\end{equation}
where the last equality uses $e\perp\Psi$ and $P_\Psi\phi\in\Psi$. This proves the claim. The purpose of the lemma is not to provide a training-error bound for the nonlinear PINN. It identifies the information-theoretic obstruction that remains whenever a finite loss observes only a subspace of velocity moments. The numerical sections then test how this obstruction appears in trained BGK shock models.
\end{proof}

\subsubsection{Hermite example}
The abstract projection argument in \cref{lem:observability} can be made explicit with normalised one-dimensional peculiar velocity $\xi$ and the unit Maxwellian $M(\xi)=(2\pi)^{-1/2}\exp(-\xi^2/2)$. Let $H_n$ denote the probabilists' Hermite polynomials,
\begin{equation}
\begin{aligned}
&H_0=1,\qquad H_1=\xi,\qquad H_2=\xi^2-1,\\
&H_3=\xi^3-3\xi,
\qquad H_4=\xi^4-6\xi^2+3 .
\end{aligned}
\label{eq:hermite_polynomials}
\end{equation}
For sufficiently small $\varepsilon$, define
\begin{equation}
    f_\varepsilon(\xi)=M(\xi)\left[1+\varepsilon H_4(\xi)\right].
    \label{eq:h4_perturbation}
\end{equation}
Hermite orthogonality gives
\begin{equation}
    \int M H_4\,\dd\xi=0,
    \qquad
    \int \xi M H_4\,\dd\xi=0,
    \qquad
    \int \xi^2 M H_4\,\dd\xi=0.
    \label{eq:h4_low_order_zero}
\end{equation}
Thus the perturbation leaves the density, mean velocity, and temperature-level second moment unchanged. In contrast,
\begin{equation}
\begin{aligned}
    \int \xi^4 f_\varepsilon\,\dd\xi
    &=\int \xi^4 M\,\dd\xi
    +\varepsilon\int \xi^4 M H_4\,\dd\xi,\\
    \int \xi^4 M H_4\,\dd\xi
    &=\int H_4^2M\,\dd\xi=4!\neq0.
\end{aligned}
    \label{eq:h4_fourth_nonzero}
\end{equation}
Therefore two positive distributions can share the same low-order macroscopic moments while possessing different fourth-order information. An analogous $H_3$ perturbation changes third-order information while remaining invisible to density, momentum, and energy. This example explains why high-order closure observables are not automatically recovered by enforcing only low-order moments and residual information in a finite PINN representation. The construction is visualised in \cref{fig:hermite_observability}, where the fourth-order perturbation changes the tail-sensitive moment content while leaving the lower-order Maxwellian constraints essentially unchanged.

\begin{figure}
\centering
\includegraphics[width=1.02\textwidth]{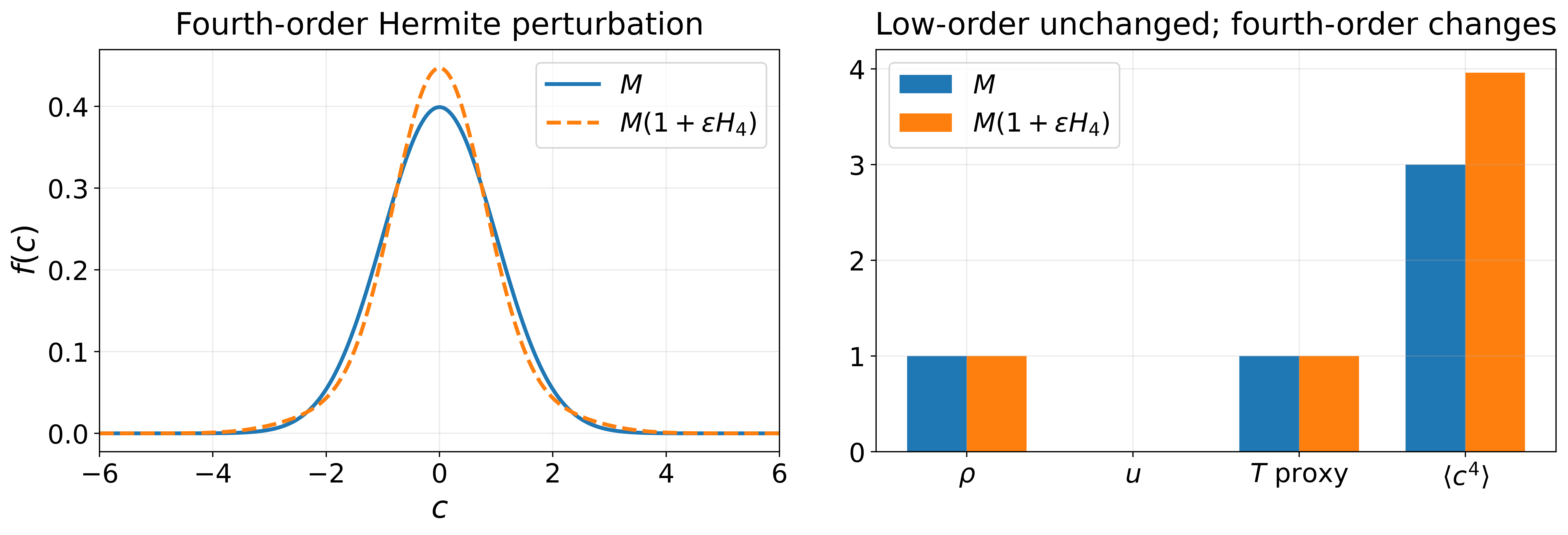}
\caption{Illustration of high-order non-observability under low-order moment constraints. A fourth-order Hermite perturbation of a Maxwellian, $M(1+\varepsilon H_4)$, leaves the density, mean velocity, and second moment nearly unchanged while altering the fourth-order moment. This diagnostic is not a shock solution; it is a velocity-space example showing why low-order moment matching does not uniquely determine closure-level nonequilibrium structure.}
\label{fig:hermite_observability}
\end{figure}

\subsubsection{Velocity-tail sensitivity}
The same conclusion can be stated in physical scaling terms. If a distribution error $\delta f$ is concentrated near a peculiar-speed magnitude $|c|\approx C$, then the induced moment errors scale approximately as
\begin{equation}
    \delta\rho=O(\varepsilon),
    \qquad
    \delta q_x=O(C^3\varepsilon),
    \qquad
    \delta m=O(C^3\varepsilon),
    \qquad
    \delta R=O(C^4\varepsilon),
    \label{eq:tail_scaling}
\end{equation}
where $m$ denotes a third-order closure-like observable and $R$ a fourth-order closure-like observable. Hence, a small high-peculiar-speed-tail error that is almost invisible to density can be strongly amplified in $R_{xx}^{cl}$. 
\Cref{fig:velocity_tail_weights,fig:cumulative_tail_contribution} combine the polynomial moment weights and the cumulative fraction of each model moment coming from $|c|>C$. Here \(c=v-u\) denotes the peculiar velocity and \(W_\rho\), \(W_q\), \(W_m\), and \(W_R\) denote schematic velocity-space weights associated with density-like, heat-flux-like, third-order-closure-like, and fourth-order-closure-like observables, respectively.  Thus \(W_\rho=1\), while the heat-flux and third-order closure weights grow approximately as \(|c|^3\), and the fourth-order closure weight grows approximately as \(|c|^4\).  The purpose of these kernels is not to reproduce the exact tensorial prefactors of each moment, but to illustrate the increasing sensitivity of higher-order observables to high-peculiar-speed velocity tails. The ordering $\Phi_R(C)>\Phi_m(C)\simeq\Phi_q(C)>\Phi_\rho(C)$ explains why the stationary normal-shock PINN can recover $\rho,u_x,T,q_x$, and $\sigma_{xx}$ but fail on unanchored fourth-order closure moments.

\begin{figure}
\centering
\includegraphics[
  width=0.98\textwidth,
  height=0.34\textheight,
  keepaspectratio,
  trim={0 0.15cm 0 0.25cm},
  clip
]{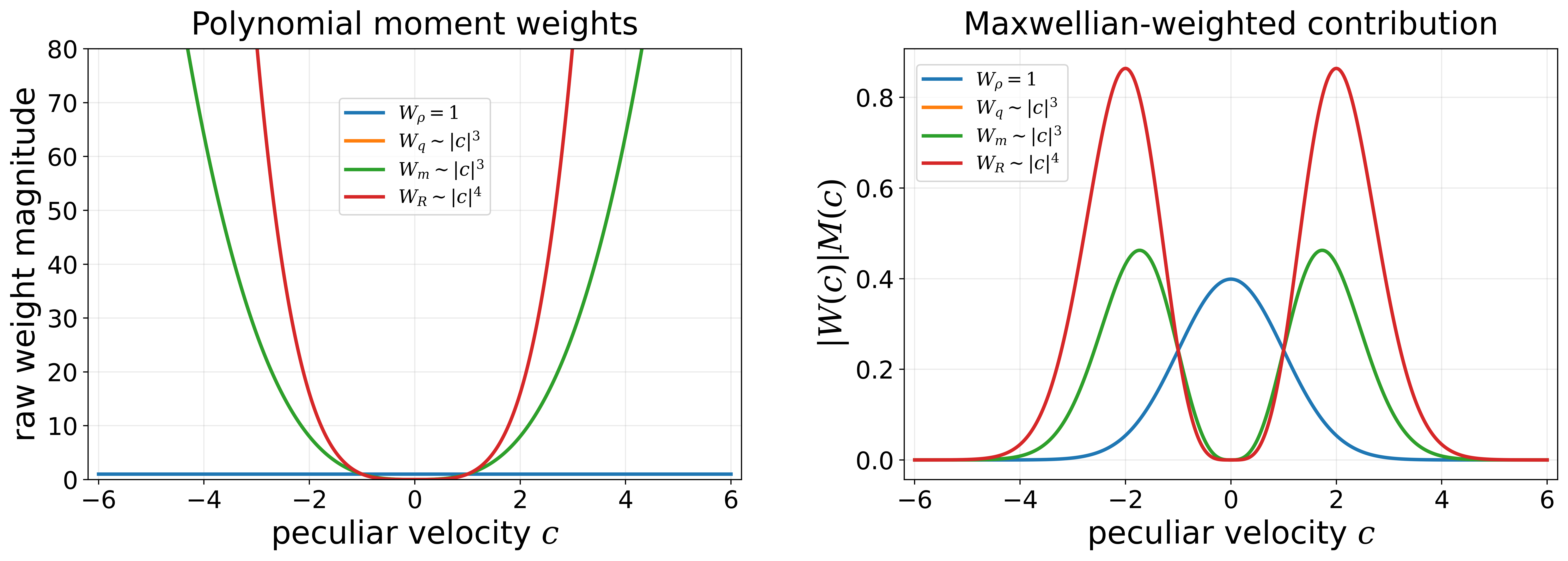}
\caption{Polynomial and Maxwellian-weighted moment kernels used to illustrate the velocity-space origin of closure sensitivity. Here \(W_\rho\), \(W_q\), \(W_m\), and \(W_R\) denote schematic velocity weights for density-like, heat-flux-like, third-order-closure-like, and fourth-order-closure-like observables, respectively. The fourth-order closure kernel grows faster in the high-peculiar-speed tail, explaining why small errors in \(f-M_d\) can have a weak density signature but a large contribution to \(R_{xx}^{cl}\).}
\label{fig:velocity_tail_weights}
\end{figure}

\begin{figure}
\centering
\includegraphics[
  width=0.88\textwidth,
  height=0.36\textheight,
  keepaspectratio,
  trim={0 0.10cm 0 0.20cm},
  clip
]{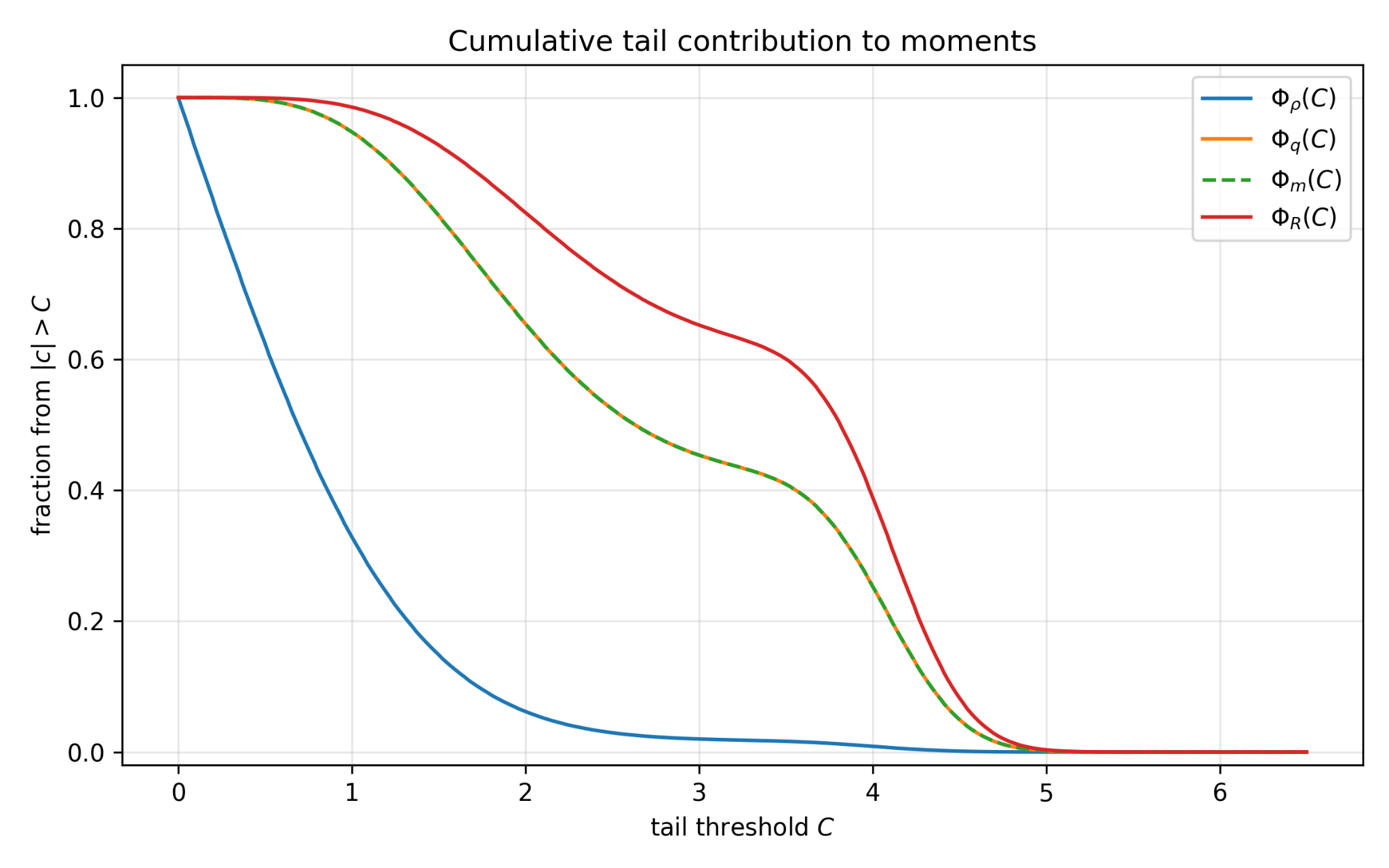}
\caption{Cumulative contribution from the high-peculiar-speed region $|c|>C$ for schematic density-like, heat-flux-like, third-order-closure-like and fourth-order-closure-like moment kernels. The fourth-order closure observable retains a larger tail contribution than density-like and lower-order moment functionals, explaining why small velocity-tail errors can be weakly visible in low-order moments but strongly amplified in $R_{xx}^{cl}$.}
\label{fig:cumulative_tail_contribution}
\end{figure}

Figure~\ref{fig:cumulative_tail_contribution} quantifies this tail sensitivity: the fourth-order closure kernel retains a larger fraction of its contribution from high-peculiar-speed velocities than density-like and lower-order kernels. The corresponding norm-level statement is that, for any moment kernel $\phi$ and distribution error $e=f_\theta-f_{\mathrm{ref}}$,

\begin{equation}
    \left|\int \phi(\bv)e(\bv)\,\dd\bv\right|
    \leq \norm{e}_{L^2_w}\,\norm{\phi}_{L^2_{w^{-1}}}.
    \label{eq:moment_bound}
\end{equation}
Thus, high-order kernels such as those defining $m_{xxx}^{cl}$ and $R_{xx}^{cl}$ require either a loss that controls high-order velocity-weighted errors or direct sparse probes of the corresponding closure moments.  This is the mathematical reason why a model can reproduce $\rho,u_x,T,q_x$, and $\sigma_{xx}$ while failing on $m_{xxx}^{cl}$ and $R_{xx}^{cl}$.

The ablation and normal-shock results below are interpreted through this observability lens.

\section{Shock-tube case matrix}
\label{sec:cases}

The numerical study is built around the shock-tube family in \cref{tab:cases}. The base case uses the Sod-type states in \eqref{eq:sod_state} at $\Kn=1$ and $t_f=0.02$. Additional cases vary the final time, rarefaction level, and right-state density. This matrix is designed to test whether the trained macro--micro representation remains stable under changes in shock age, Knudsen number, and shock strength.

\begin{table}
\centering
\caption{Shock-tube validation cases. All cases use $x\in[-0.5,0.5]$, three-dimensional velocity space, $v_{\max}=7$, and inflow Maxwellian boundary conditions.}
\label{tab:cases}
\footnotesize
\begin{adjustbox}{max width=\textwidth}
\begin{tabular}{llcccccc}
\toprule
Case & Description & $\Kn$ & $t_f$ & $\rho_L$ & $T_L$ & $\rho_R$ & $T_R$\\
\midrule
S0 & strong base shock & 1.0 & 0.020 & 1.000 & 1.0 & 0.125 & 0.8\\
S1 & early-time strong shock & 1.0 & 0.015 & 1.000 & 1.0 & 0.125 & 0.8\\
S2 & later-time strong shock & 1.0 & 0.025 & 1.000 & 1.0 & 0.125 & 0.8\\
S3 & lower Knudsen number & 0.5 & 0.020 & 1.000 & 1.0 & 0.125 & 0.8\\
S4 & higher Knudsen number & 2.0 & 0.020 & 1.000 & 1.0 & 0.125 & 0.8\\
S5 & mild density jump & 1.0 & 0.020 & 1.000 & 1.0 & 0.375 & 0.8\\
\bottomrule
\end{tabular}
\end{adjustbox}
\end{table}

\section{Results for shock-tube validation}
\label{sec:results}

\subsection{Base strong shock}

\Cref{fig:main_strong} shows the base strong shock after the final continuation stage.  This test is deliberately stricter than the usual PINN--BGK validation based on \(\rho\), \(u_x\), \(T\), or residual histories alone.  Previous PINN--BGK and neural BGK studies have mainly assessed macroscopic fields, reconstructed distributions, residual convergence, or low-order nonequilibrium quantities; closure-level shock moments such as \(m_{xxx}^{cl}\) and \(R_{xx}^{cl}\) have not generally been used as validation observables \citep{LouMengKarniadakis2021PINNBGK,Ko2026WeightedBGK,Oh2025SPINNBGK,ChenDimarcoPareschi2025SAPNN}.  Here the macroscopic fields are essentially indistinguishable from the DVM reference on the plot scale, and the heat flux and normal stress reproduce the correct shock-layer sign, phase and amplitude.

The largest remaining discrepancy appears in the right tail of \(q_x\) and in the positive lobe of \(\sigma_{xx}\).  This behaviour is consistent with the velocity weighting of the heat-flux kernel: \(q_x\) is an odd, third-order, tail-weighted moment, so a small displacement or width error in the asymmetric high-peculiar-speed part of the distribution can produce a visible relative error even when the density, velocity and temperature are nearly unchanged.  The important point is that the positive macro--micro representation captures the primary nonequilibrium layer, including the sign and location of \(q_x\) and \(\sigma_{xx}\); the more stringent question, addressed below, is whether such agreement also determines closure-level projections such as \(m_{xxx}^{cl}\) and \(R_{xx}^{cl}\).

\begin{figure}
\centering
\includegraphics[
  width=0.95\textwidth,
  height=0.88\textheight,
  keepaspectratio,
  trim={0 0 0 2.75cm},
  clip
]{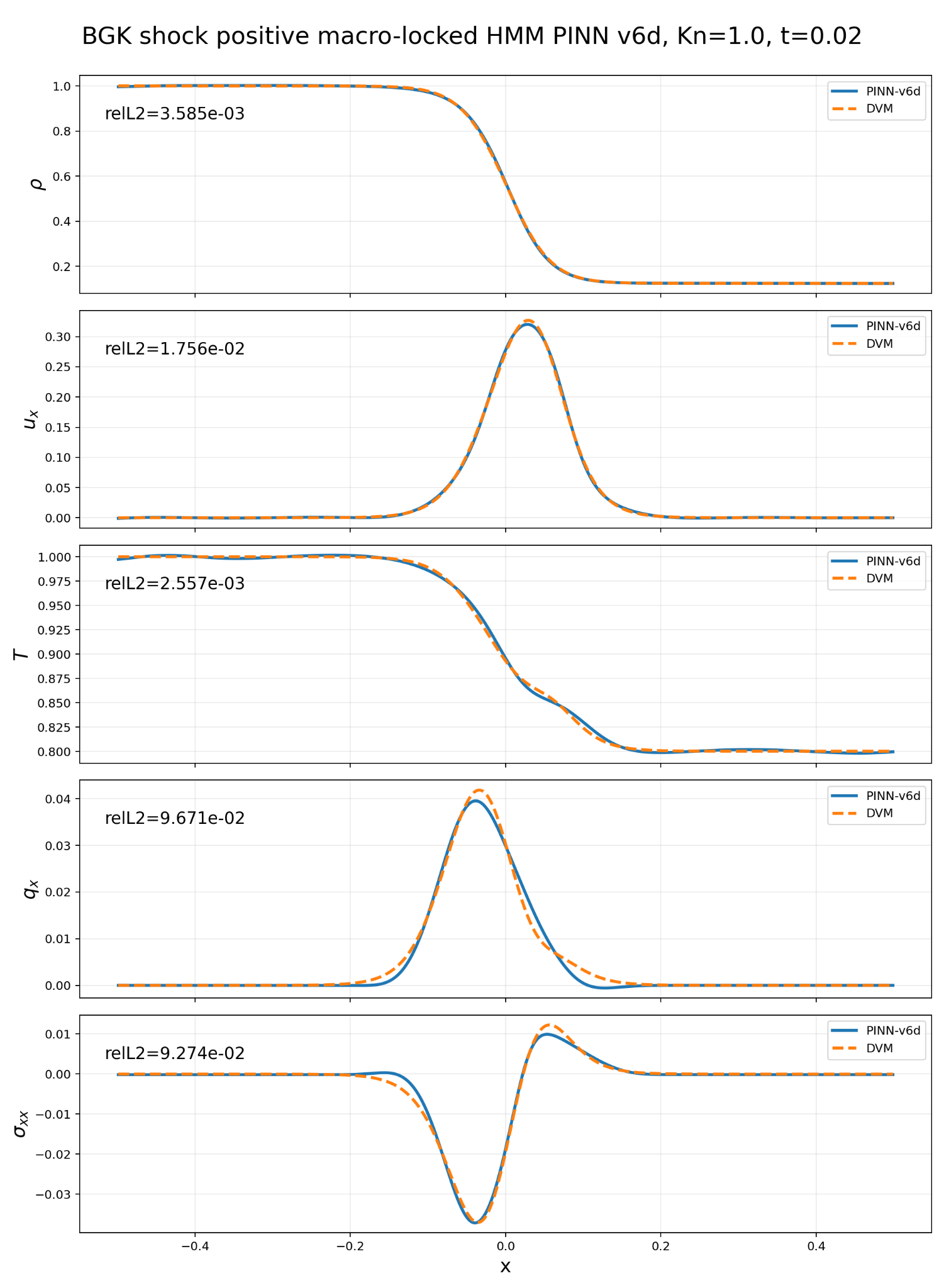}
\caption{Base strong shock-tube case S0: comparison between the positive macro--micro neural solver and the DVM reference for $\Kn=1$, $t_f=0.02$, and right state $(\rho_R,u_R,T_R)=(0.125,0,0.8)$.}
\label{fig:main_strong}
\end{figure}

\subsection{Time, Knudsen-number, and shock-strength variations}

The remaining shock-tube cases test changes in final time, Knudsen number, and shock strength. To keep the main text compact, the individual profile plots for these auxiliary cases are retained in the archive and the validation is summarised by \cref{tab:shock_tube_case_matrix_errors}. The method remains accurate at both earlier and later times. The later time $t_f=0.025$ is visibly harder because the nonequilibrium profiles broaden and acquire stronger asymmetry, but the overall phase and amplitude remain captured. Variation of $\Kn$ from 0.5 to 2.0 is also robust; both cases show sub-percent to few-percent errors in the macroscopic fields and good recovery of $q_x$ and $\sigma_{xx}$. The mild shock case tests a different right-state density and provides a separate shock-strength check.

\Cref{tab:shock_tube_case_matrix_errors} provides a compact summary of the robustness trend: the macroscopic errors remain uniformly small across changes in time, Knudsen number, and shock strength, while the nonequilibrium moment errors are the controlling quantities.

\begin{table}
\centering
\caption{Relative $L^2$ errors for the shock-tube case matrix. The table summarises robustness with respect to final time, Knudsen number and shock strength without relying on repeated profile plots.}
\label{tab:shock_tube_case_matrix_errors}
\small
\renewcommand{\arraystretch}{1.18}
\setlength{\tabcolsep}{3.5pt}
\begin{tabularx}{\textwidth}{
@{}
>{\raggedright\arraybackslash}p{0.32\textwidth}
>{\centering\arraybackslash}X
>{\centering\arraybackslash}X
>{\centering\arraybackslash}X
>{\centering\arraybackslash}X
>{\centering\arraybackslash}X
@{}}
\toprule
\textbf{Case} &
$\rho$ &
$u_x$ &
$T$ &
$q_x$ &
$\sigma_{xx}$ \\
\midrule
Sod, $\Kn=1$, $t=0.020$ &
$1.9\mathrm{e}{-3}$ &
$1.0\mathrm{e}{-2}$ &
$2.1\mathrm{e}{-3}$ &
$9.7\mathrm{e}{-2}$ &
$5.4\mathrm{e}{-2}$ \\

Sod, $\Kn=1$, $t=0.015$ &
$1.1\mathrm{e}{-3}$ &
$2.8\mathrm{e}{-3}$ &
$1.1\mathrm{e}{-3}$ &
$3.4\mathrm{e}{-2}$ &
$2.1\mathrm{e}{-2}$ \\

Sod, $\Kn=1$, $t=0.025$ &
$6.4\mathrm{e}{-3}$ &
$7.0\mathrm{e}{-3}$ &
$3.4\mathrm{e}{-3}$ &
$7.7\mathrm{e}{-2}$ &
$7.6\mathrm{e}{-2}$ \\

Sod, $\Kn=0.5$, $t=0.020$ &
$1.9\mathrm{e}{-3}$ &
$3.6\mathrm{e}{-3}$ &
$1.3\mathrm{e}{-3}$ &
$3.2\mathrm{e}{-2}$ &
$3.3\mathrm{e}{-2}$ \\

Sod, $\Kn=2$, $t=0.020$ &
$2.0\mathrm{e}{-3}$ &
$3.6\mathrm{e}{-3}$ &
$1.4\mathrm{e}{-3}$ &
$3.5\mathrm{e}{-2}$ &
$3.3\mathrm{e}{-2}$ \\

Mild jump, $\Kn=1$, $t=0.020$ &
$1.1\mathrm{e}{-3}$ &
$1.8\mathrm{e}{-3}$ &
$1.5\mathrm{e}{-3}$ &
$3.2\mathrm{e}{-2}$ &
$2.5\mathrm{e}{-2}$ \\
\bottomrule
\end{tabularx}
\end{table}

\section{Stationary BGK normal shock}
\label{sec:standing_shock}
\label{sec:normal_shock}

The shock-tube cases test moving nonequilibrium layers.  As a more restrictive benchmark, we also solve a stationary planar BGK normal shock.  The upstream Mach number is $M_1=2$ and the monatomic ratio of specific heats is $\gamma=5/3$.  With $\rho_1=1$, $T_1=1$, and $u_1=M_1\sqrt{\gamma T_1}=2.5819889$, the downstream Rankine--Hugoniot state is
\begin{equation}
    \rho_2=2.2857143,\qquad u_2=1.1296201,\qquad T_2=2.078125.
\end{equation}
The physical coordinate is measured in upstream mean-free paths, $x^*=x/\lambda_1$.  The computational interval is $x^*\in[-40,40]$, mapped to $[-0.5,0.5]$ for the neural solver, which gives an effective relaxation parameter $\Kn_{eff}=1/(2L_\lambda)=1/80=0.0125$.

\subsection{Refined conservative DVM reference}
\label{subsec:standing_dvm}

The normal-shock reference is generated by a separate conservative DVM, not by the neural solver.  The final reference used for the closure analysis employs $N_x=1600$ spatial cells and a refined tensor-product velocity grid
\begin{equation}
    N_{v_x}\times N_{v_y}\times N_{v_z}=97\times 19\times 19,
    \qquad N_v=35017,
    \qquad v_x,v_y,v_z\in[-12,12].
\end{equation}
At each cell the local Maxwellian is constructed as a conservative discrete Maxwellian $M_d$: its parameters are solved so that the quadrature moments match the desired $\rho$, $u_x$, and $T$ on the truncated velocity grid.  This step is important for a stationary shock.  Direct sampling of the continuous Maxwellian on a finite grid introduces small plateau errors in density, temperature, and flux; in a shock-frame computation these errors appear as spurious sources.  The upstream and downstream boundary distributions are therefore the conservative discrete Maxwellians associated with the Rankine--Hugoniot states.

Transport is advanced with first-order upwind fluxes in $x^*$ and the BGK relaxation is applied exactly over the pseudo-time step, as in \eqref{eq:dvm_relax}.  A shock-fixing recentring step keeps the density midpoint at the origin.  The refined run was advanced to $36000$ pseudo-time iterations; the boundary plateaus remained at the prescribed upstream and downstream states, and the full array $f(x_i,\bv_j)$ was saved.  This full DVM distribution is used only for validation and for velocity-space diagnostics.  The neural model is not trained against the dense phase-space field.  Additional DVM grid and velocity-domain verification is given in \cref{app:dvm_verification}.  In particular, the reference-certification sweeps in \cref{tab:dvm_nx_refcert,tab:dvm_vmax_refcert} show that the active-support differences in $R_{xx}^{cl}$ for $N_x=1200$ versus $1600$ and for $v_{\max}=12$ versus $14$ are $9.37\times10^{-3}$ and $2.69\times10^{-2}$, respectively, both well below the final closure-head error reported in \cref{tab:standing_errors}.

The closure-level quantities used below are
\begin{align}
    m_{xxx}^{cl} &= \int \left(c_x^3-\frac{3}{5}c_x|\bc|^2\right)(f-M_d)\,\dd\bv,\label{eq:mxxx_cl}\\
    R_{xx}^{cl} &= \int |\bc|^2\left(c_x^2-\frac{|\bc|^2}{3}\right)(f-M_d)\,\dd\bv - 7T\sigma_{xx},\label{eq:Rxx_cl}\\
    Q_{xx}^{BGK} &= -\sigma_{xx},\qquad Q_x^{BGK}=-2q_x.\label{eq:bgk_prod_def}
\end{align}
The subtraction $f-M_d$ removes equilibrium and quadrature offsets, so that $m_{xxx}^{cl}$ and the nonequilibrium part of $R_{xx}^{cl}$ vanish in the Maxwellian plateaus.

\subsection{Flux-locked PINN and tail-closure head}
\label{subsec:standing_pinn}

The stationary-shock PINN uses the positive macro--micro structure of the shock-tube solver but in a steady coordinate.  The base state is a smooth interpolation between the two Rankine--Hugoniot states.  The loss includes sparse macro-locks for $\rho$, $u_x$, and $T$; sparse anchors for $q_x$ and $\sigma_{xx}$; and collision-invariant flux locks,
\begin{equation}
    F_m=\int v_x f\,\dd\bv,
    \qquad
    F_p=\int v_x^2 f\,\dd\bv,
    \qquad
    F_E=\int \frac{1}{2}|\bv|^2v_x f\,\dd\bv,
\end{equation}
all constrained to their upstream values.  The steady residual is $v_x\partial_x f_\theta-(M[f_\theta]-f_\theta)/\Kn_{eff}$ and is ramped in after the moment and flux constraints have stabilised the shock.

The compact positive ansatz accurately recovers the lower moments and exposes the remaining closure projection.  The final model augments the kinetic prediction of the fourth-order closure by a shock-local closure head,
\begin{subequations}
\label{eq:Rtail_model}
\begin{align}
    \widehat R_{xx}^{cl}(x) &= R_{xx}^{f}(x)+R_{\mathrm{tail}}(x),\\
    R_{\mathrm{tail}}(x) &= A_R G_R(x)\tanh\left[\sum_{\ell=1}^{17} a_\ell
    \exp\left(-\frac{(x-x_\ell)^2}{w_R^2}\right)\right].
\end{align}
\end{subequations}
Here $R_{xx}^{f}$ is the fourth-order closure computed from the positive distribution $f_\theta$, while $R_{\mathrm{tail}}$ is a bounded shock-local correction.  The centres $x_\ell$ correspond to $x/\lambda_1=\{-8,-6,-4,-3,-2,-1.5,-1,-0.5,0,0.5,1,1.5,2,3,4,6,8\}$ and the gate $G_R$ forces the correction to vanish in the Maxwellian plateaus.  A post-hoc pruning sensitivity study of these centres is reported in \cref{app:rhead_pruning}. The prescribed centres define an auditable shock-coordinate basis for isolating the missing closure projection.  In this controlled basis the relationship between spatial support, velocity-tail cancellation, and closure observability can be measured directly.  A transferable solver should promote the same idea to an adaptive or learned closure basis tied to shock-local coordinates, gradients, or invariant nonequilibrium features.

The closure-head parameters are trained only through the scalar fourth-order observable.  Let \(\{x_j^R\}_{j=1}^{N_R}\) denote the shock-local \(R\)-anchor locations extracted from the refined DVM reference.  The closure-stage loss added to the already stabilised flux-locked model is
\[
\mathcal{L}_{R{\rm head}}
=
\frac{1}{N_R}\sum_{j=1}^{N_R}
\left[
\frac{\widehat R_{xx}^{cl}(x_j^R)-R_{xx,{\rm DVM}}^{cl}(x_j^R)}
{S_R+s_R}
\right]^2
+\lambda_a\sum_{\ell=1}^{17}a_\ell^2 ,
\]
where \(S_R=\|R_{xx,{\rm DVM}}^{cl}\|_\infty\) and \(s_R\) is a small numerical floor.  No velocity-space distribution values are used in this loss.  Thus the head is calibrated from a sparse scalar closure observable, not from dense supervision of \(f(x,\mathbf v)\).

The need for \eqref{eq:Rtail_model} is visible directly in the refined DVM velocity distribution.  Define the fourth-order closure kernel
\begin{equation}
    H_R(\bc)=\left(|\bc|^2-7T\right)\left(c_x^2-\frac{|\bc|^2}{3}\right).
\end{equation}

\begin{figure}
\centering
\begin{subfigure}{0.74\textwidth}
\centering
\includegraphics[width=\textwidth]{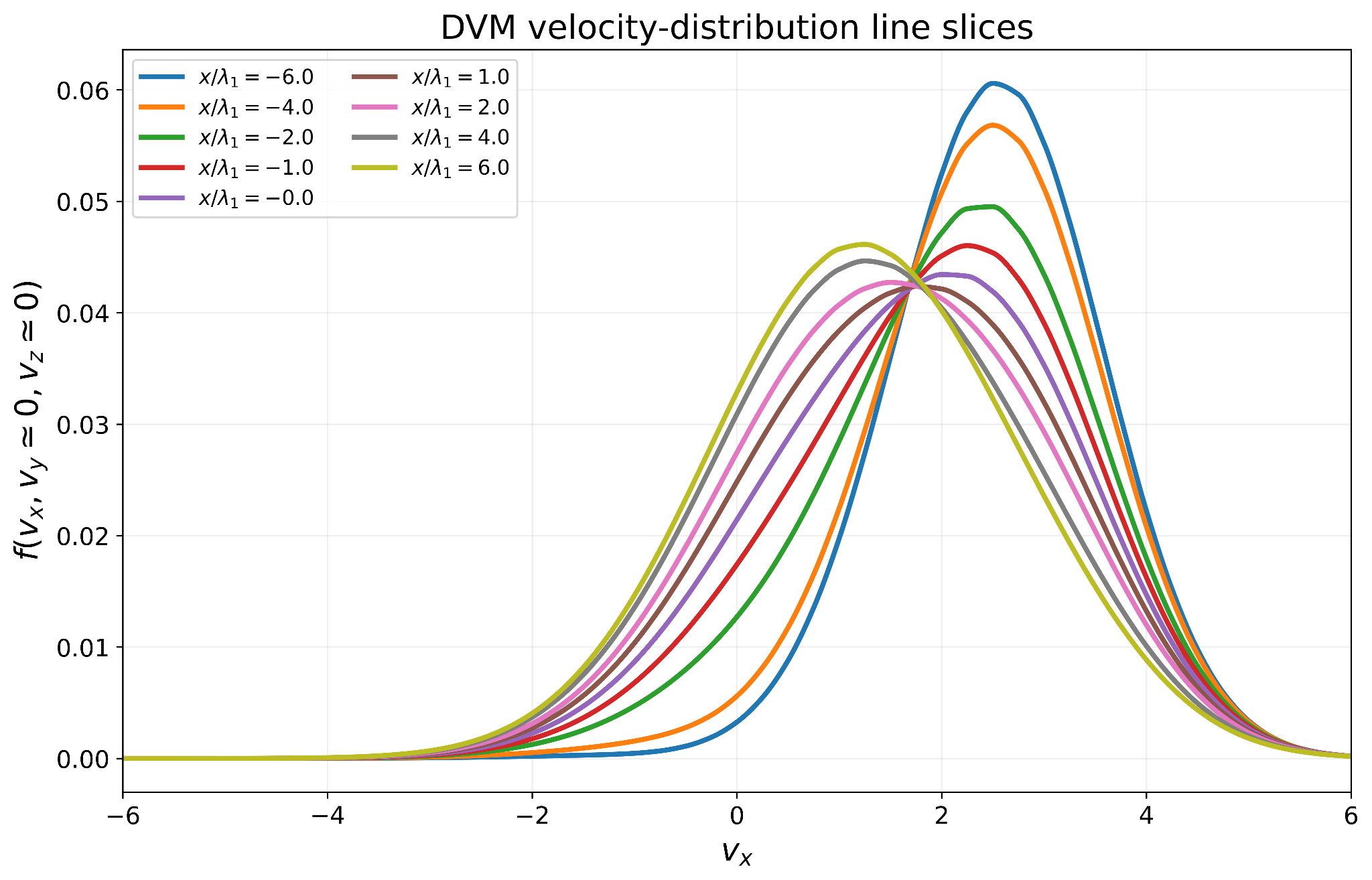}
\caption{Line slices of the refined DVM distribution function.}
\end{subfigure}
\vspace{0.4em}
\begin{subfigure}{0.74\textwidth}
\centering
\includegraphics[width=\textwidth]{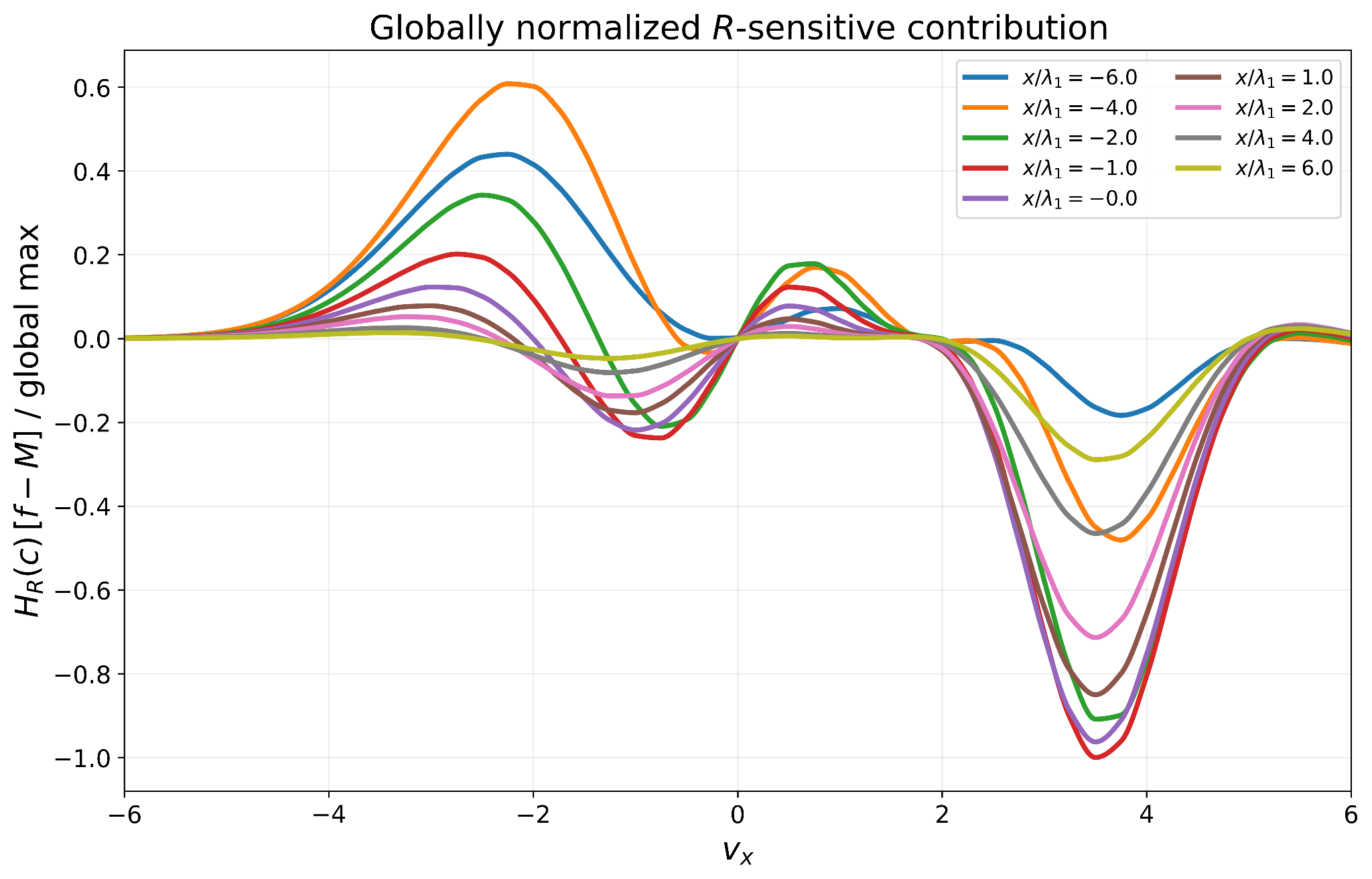}
\caption{Tail-weighted nonequilibrium contribution controlling $R_{xx}^{cl}$.}
\end{subfigure}
\caption{Velocity-space diagnostics from the refined conservative DVM normal-shock reference.  (a) Line slices of the distribution function across the shock layer show the shift, broadening and asymmetry of $f$ relative to the local flow.  (b) The $R$-weighted nonequilibrium integrand $H_R(\bc)[f-M_d]$ exposes the fourth-order closure projection itself.  Its separated positive and negative lobes show that $R_{xx}^{cl}$ is controlled by a signed cancellation in the velocity tail.  The plotted curves are shape-preserving interpolants of the discrete velocity nodes; all quantitative errors are computed on the original quadrature grid.}
\label{fig:dvm_tail_diagnostics}
\end{figure}

The line slices in \cref{fig:dvm_tail_diagnostics}(a) show the deformation of the distribution across the stationary shock: the local distribution shifts, broadens and becomes asymmetric as the flow passes through the nonequilibrium layer.  The distribution-function slices are intentionally shown before applying any moment kernel.  They appear as smooth shifted and broadened Maxwellian-like profiles for this Mach-2 BGK shock.  The important point is that the fourth-order closure difficulty is not visually obvious in \(f\) itself; it becomes apparent only after projecting the nonequilibrium distribution with the \(R\)-sensitive kernel \(H_R(\mathbf c)\), which exposes the separated positive and negative tail contributions in panel (b). The weighted diagnostic in \cref{fig:dvm_tail_diagnostics}(b) then isolates the actual integrand associated with the missing closure projection, namely $H_R(\bc)[f-M_d]$.  This quantity separates into positive and negative velocity-space lobes whose areas nearly cancel.  The value of $R_{xx}^{cl}$ is the residual of this cancellation, so accurate recovery requires the relative weight of the $R$-sensitive lobes rather than only the visually dominant part of $f$ or the lower-order moments.  This observation motivates the shock-local closure head: it supplies a scalar degree of freedom aligned with the missing $R$ projection.  The same cancellation mechanism is quantified in \cref{app:signed_cancellation}, which reports unsigned-to-signed velocity-space contribution ratios for the active support of the main nonequilibrium and closure observables.

\subsection{Normal-shock results}
\label{subsec:standing_results}

\Cref{fig:standing_v17_profiles} compares the final PINN prediction with the refined DVM reference.  The model recovers the primary shock structure and the BGK production terms while also correcting the previously unresolved fourth-order closure.  The relative errors at the selected checkpoint are listed in \cref{tab:standing_errors}. The density, velocity, and temperature errors are below $4\times10^{-3}$, the heat-flux and stress errors are $7.14\times10^{-2}$ and $9.40\times10^{-2}$, and the third- and fourth-order closure errors are $1.17\times10^{-1}$ and $1.12\times10^{-1}$. Separate closure-only plots are not shown; the main text reports the same information through the full profile figure and the error table.

\begin{figure}
\centering
\includegraphics[
  width=\textwidth,
  height=0.84\textheight,
  keepaspectratio,
  trim={0 0 0 2.2cm},
  clip
]{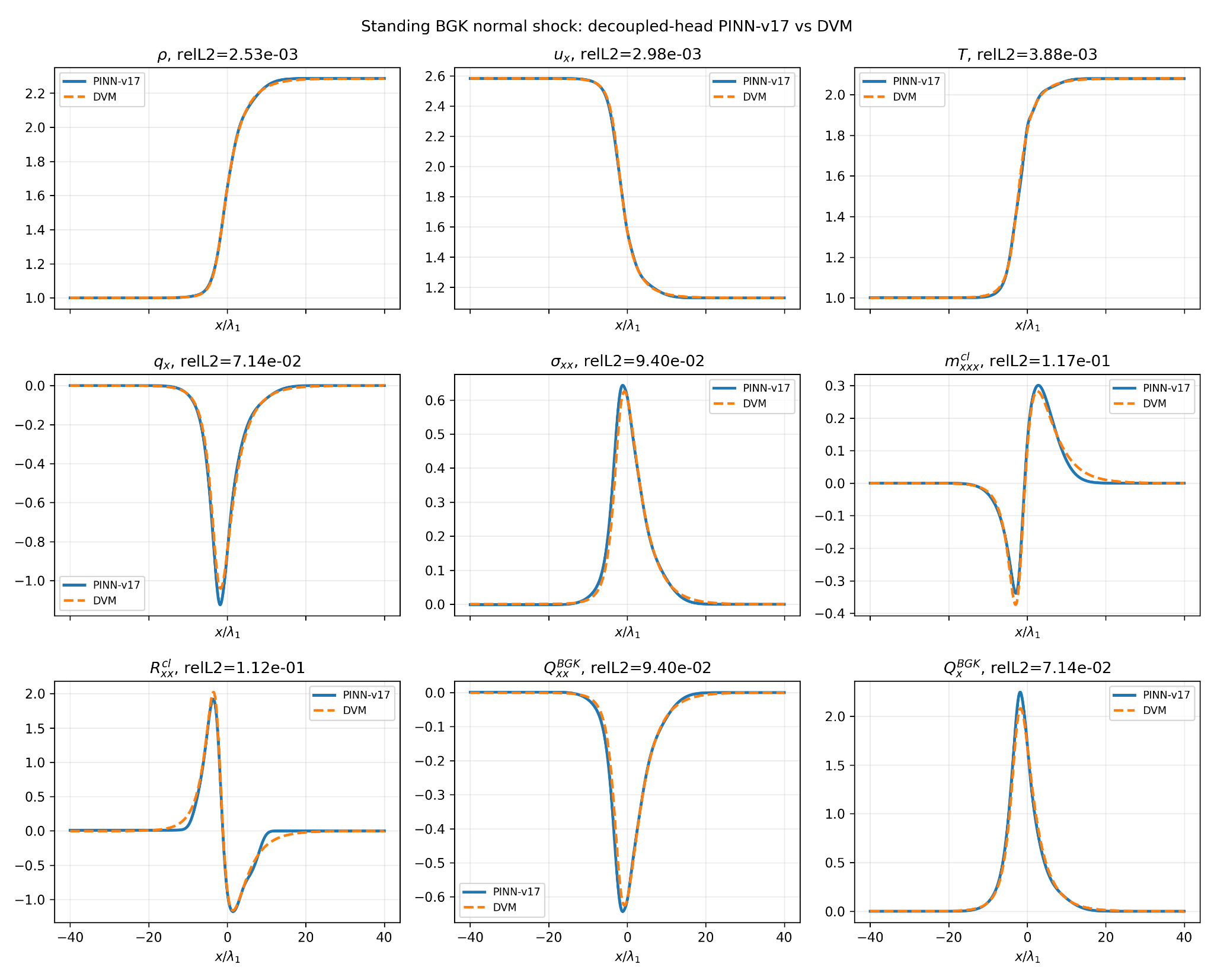}
\caption{Mach-2 stationary BGK normal shock: PINN with a shock-local tail-closure head compared with the refined conservative DVM reference.  The same checkpoint is used for all quantities.  The fourth-order closure error is reduced to $1.12\times10^{-1}$ without degrading the lower-order shock moments.}
\label{fig:standing_v17_profiles}
\end{figure}

\begin{table}
\centering
\caption{Relative $L^2$ errors for the Mach-2 stationary normal-shock PINN with a shock-local tail-closure head. The closure-head correction acts only on $R_{xx}^{cl}$; the lower moments are still computed from the positive kinetic representation.}
\label{tab:standing_errors}
\normalsize
\renewcommand{\arraystretch}{1.25}
\setlength{\tabcolsep}{9pt}
\begin{tabular}{@{}lclc@{}}
\toprule
\textbf{Quantity} & \textbf{rel. $L^2$ error} &
\textbf{Quantity} & \textbf{rel. $L^2$ error} \\
\midrule
$\rho$              & $2.53\times10^{-3}$ &
$\sigma_{xx}$       & $9.40\times10^{-2}$ \\

$u_x$               & $2.98\times10^{-3}$ &
$m_{xxx}^{cl}$      & $1.17\times10^{-1}$ \\

$T$                 & $3.88\times10^{-3}$ &
$R_{xx}^{cl}$       & $1.12\times10^{-1}$ \\

$q_x$               & $7.14\times10^{-2}$ &
$Q_{xx}^{BGK}$      & $9.40\times10^{-2}$ \\

                   &                       &
$Q_x^{BGK}$         & $7.14\times10^{-2}$ \\
\bottomrule
\end{tabular}
\end{table}

\begin{table}
\centering
\caption{Common-initialisation closure-stage ablation of direct distribution-function supervision in the stationary normal shock. All cases use the same shock-local closure-head architecture and are continued from the same pre-closure model. $N_f$ is the number of sparse velocity-space samples entering the optional $f$-probe loss; the ``no $f$ loss'' case removes this term entirely. The nearly identical $R_{xx}^{cl}$ errors show that the final closure recovery does not require direct supervision of $f$.}
\label{tab:no_f_supervision_ablation}
\footnotesize
\begin{adjustbox}{max width=\textwidth}
\begin{tabular}{lccc}
\toprule
Case & $N_f$ & rel. $L^2$ error in $R_{xx}^{cl}$ & rel. $L^2$ error in $m_{xxx}^{cl}$\\
\midrule
No direct $f$ loss & 0 & $1.108\times10^{-1}$ & $1.168\times10^{-1}$\\
$R$-sensitive probes, 32 per station & 544 & $1.109\times10^{-1}$ & $1.168\times10^{-1}$\\
$R$-sensitive probes, 64 per station & 1088 & $1.109\times10^{-1}$ & $1.168\times10^{-1}$\\
$R$-sensitive probes, 128 per station & 2176 & $1.109\times10^{-1}$ & $1.168\times10^{-1}$\\
Random probes, 128 per station & 2176 & $1.109\times10^{-1}$ & $1.168\times10^{-1}$\\
\bottomrule
\end{tabular}
\end{adjustbox}
\end{table}

It is important to interpret the \(m_{xxx}^{cl}\) and \(R_{xx}^{cl}\) errors in \cref{tab:standing_errors} relative to the corresponding pre-closure behaviour.  The final errors are comparable because the shock-local head has specifically corrected the previously missing fourth-order projection.  Before this correction, the compact flux-locked kinetic model already recovered \(m_{xxx}^{cl}\) at the \(O(10^{-1})\) level, whereas \(R_{xx}^{cl}\) remained an order-unity error.  Thus \(R_{xx}^{cl}\) is the problematic observable not because its final corrected error is larger than that of \(m_{xxx}^{cl}\), but because it is the projection that is not made observable by the positive kinetic ansatz, flux constraints and lower nonequilibrium moment anchors alone.  The remaining \(m_{xxx}^{cl}\) error reflects the accuracy of the underlying kinetic representation, while the reduced \(R_{xx}^{cl}\) error reflects the additional closure-aligned degree of freedom introduced by the shock-local head.

The distribution-function probes used for the refined DVM diagnostics are selected from the $R$-sensitive velocity-score field rather than from a uniform sampling of phase space. They identify where the fourth-order tail projection resides and provide an audit of the missing observable. The ablation in \cref{tab:no_f_supervision_ablation} shows that the final closure-head model retains its accuracy when the direct distribution-probe loss is removed. The reported error is evaluated on the full DVM profiles; the reduction of $R_{xx}^{cl}$ is therefore a global closure-level validation result produced by the scalar closure degree of freedom.

The RANS analogy discussed in the introduction can now be tested in the kinetic setting, after the closure observable has been defined.  A stricter data-removal ablation is reported in \cref{app:rweighted_residual}: the direct \(R\)-level losses are removed and replaced by an \(R\)-weighted projection of the same BGK residual.  This physics-only projection slightly improves the order-unity \(R_{xx}^{cl}\) error, but it does not recover the signed positive--negative lobe structure seen in the DVM reference.  The sparse closure observable used in the final model is therefore not merely replacing a missing residual weight; it supplies information about a projection that the compact positive BGK representation leaves weakly observed.  A genuinely data-free formulation would need to introduce \(R_{xx}\) as an explicit closure variable in a heat-flux moment equation, which changes the model class from the BGK--PINN used here to an R13/R26-informed neural moment solver.

The $R_{xx}^{cl}$ improvement is a closure-specific effect.  Compact positive PINNs with the same flux locking and lower-moment anchors leave $R_{xx}^{cl}$ with relative errors near $0.9$, even when dense micro-anchors or explicit fourth-order modes are introduced inside the distribution ansatz.  The tail-closure head reduces this error to $1.12\times10^{-1}$ while leaving $m_{xxx}^{cl}$ and the BGK production terms at comparable accuracy.  The result supports the observability interpretation: a closure weighted by a velocity kernel that is nearly orthogonal to the observed lower-moment subspace requires a degree of freedom aligned with that kernel.

\Cref{tab:closure_observability} summarises this comparison across model classes.  It separates the compact flux-locked model, the compact variants with additional fourth-order kinetic modes, and the final shock-local closure-head model, showing that adding generic fourth-order structure inside the distribution is not sufficient; the successful correction is the one aligned directly with the missing $R_{xx}^{cl}$ projection.  \Cref{app:rhead_pruning} further checks this correction by pruning the trained closure-head centres and re-evaluating the full DVM-profile error without retraining.

\begin{table}
\centering
\renewcommand{\arraystretch}{1.28}
\setlength{\tabcolsep}{5.5pt}
\caption{Closure-observability summary for the stationary normal shock. The compact kinetic models recover the lower moments but fail to recover the fourth-order tail-sensitive closure. The shock-local tail-closure head makes $R_{xx}^{cl}$ observable without dense phase-space supervision.}
\label{tab:closure_observability}
\begin{tabularx}{\textwidth}{
@{}
>{\raggedright\arraybackslash}p{0.31\textwidth}
>{\raggedright\arraybackslash}X
>{\centering\arraybackslash}p{0.22\textwidth}
@{}}
\toprule
\textbf{Model class} &
\textbf{Information added beyond $q_x,\sigma_{xx}$ anchors} &
\textbf{Relative $L^2$ error in $R_{xx}^{cl}$} \\
\midrule

Compact positive flux-locked PINN &
None &
$O(10^{0})$ \\

Compact variants with additional in-$f$ fourth-order modes &
Added kinetic basis but no closure head &
$8.7\times10^{-1}$--$9.1\times10^{-1}$ \\

PINN &
Shock-local $R$ closure head; no direct $f$ supervision &
$1.12\times10^{-1}$ \\

\bottomrule
\end{tabularx}
\end{table}

\section{Ablation and data-efficiency study}
\label{sec:ablation}

The ablation results in \cref{tab:ablation,fig:data_efficiency,tab:no_f_supervision_ablation} isolate the role of each information channel. Moment-level anchors regularise specific nonequilibrium projections; they do not replace the PDE residual, boundary conditions, flux constraints or positivity of the distribution. The individual profile plots for these ablations are not shown, while the main text reports the quantitative trends that determine the model design.

With no anchors and no macro-locking, the residual-based solver produces unacceptable errors in $u_x$, $q_x$, and $\sigma_{xx}$. With 32 macro-lock points only, the macroscopic fields become accurate, but the heat flux and stress are completely wrong. This demonstrates that sparse macroscopic data alone cannot infer the nonequilibrium distribution. With $q$-only anchoring, $q_x$ improves but $\sigma_{xx}$ fails; with stress-only anchoring, the opposite occurs. With $q$--$\sigma$ anchoring but no macro-lock, the nonequilibrium moments are not terrible, but the macroscopic fields drift. These failures support the central design of the method: positive reconstruction, macro-lock stabilisation, and joint heat-flux/stress anchoring are all required.

\begin{table}
\centering
\caption{Ablation errors for the base shock-tube case. $N_A$ is the number of shock-layer locations used for each nonequilibrium moment anchor; $N_M$ is the number of macroscopic lock locations.}
\label{tab:ablation}
\scriptsize
\begin{adjustbox}{max width=\textwidth}
\begin{tabular}{lccccccc}
\toprule
Case & $N_A$ & $N_M$ & $\rho$ & $u_x$ & $T$ & $q_x$ & $\sigma_{xx}$\\
\midrule
none & 0 & 0 & $2.352\times10^{-2}$ & $6.026\times10^{-1}$ & $1.001\times10^{-2}$ & $4.252\times10^{-1}$ & $5.504\times10^{-1}$\\
macro32 & 0 & 32 & $3.622\times10^{-3}$ & $1.558\times10^{-2}$ & $2.333\times10^{-3}$ & $8.789\times10^{-1}$ & $1.508$\\
$q$-only8/m32 & 8 & 32 & $6.375\times10^{-3}$ & $1.696\times10^{-2}$ & $4.432\times10^{-3}$ & $6.227\times10^{-2}$ & $1.192$\\
$\sigma$-only8/m32 & 8 & 32 & $3.610\times10^{-3}$ & $1.714\times10^{-2}$ & $2.349\times10^{-3}$ & $9.593\times10^{-1}$ & $8.815\times10^{-2}$\\
qsig8/no-macro & 8 & 0 & $1.369\times10^{-1}$ & $5.415\times10^{-1}$ & $2.304\times10^{-2}$ & $7.952\times10^{-2}$ & $8.870\times10^{-2}$\\
qsig4/m32 & 4 & 32 & $5.443\times10^{-3}$ & $1.530\times10^{-2}$ & $4.193\times10^{-3}$ & $9.634\times10^{-2}$ & $2.924\times10^{-1}$\\
qsig8/m32 & 8 & 32 & $4.952\times10^{-3}$ & $1.865\times10^{-2}$ & $3.053\times10^{-3}$ & $9.520\times10^{-2}$ & $1.078\times10^{-1}$\\
qsig16/m32 & 16 & 32 & $5.076\times10^{-3}$ & $1.650\times10^{-2}$ & $3.588\times10^{-3}$ & $8.867\times10^{-2}$ & $6.438\times10^{-2}$\\
qsig8/m64 & 8 & 64 & $4.370\times10^{-3}$ & $1.472\times10^{-2}$ & $2.470\times10^{-3}$ & $8.621\times10^{-2}$ & $7.602\times10^{-2}$\\
full42/m160 & 42 & 160 & $3.680\times10^{-3}$ & $1.281\times10^{-2}$ & $2.568\times10^{-3}$ & $8.293\times10^{-2}$ & $5.038\times10^{-2}$\\
\bottomrule
\end{tabular}
\end{adjustbox}
\end{table}

\Cref{fig:data_efficiency} provides a compact visual summary of the data-efficiency trend. Increasing the number of joint $q$--$\sigma$ anchors from 4 to 16 gives a substantial reduction in stress error, and the full42/m160 setting acts as an upper-bound rather than a qualitatively different solution. The cases without macro-locking in \cref{tab:ablation} show that nonequilibrium anchoring alone can leave $\rho$, $u_x$, and $T$ drifting, so macro-locking is a stabilisation constraint rather than a substitute for nonequilibrium moment observability.

\begin{figure}
\centering
\includegraphics[width=\textwidth]{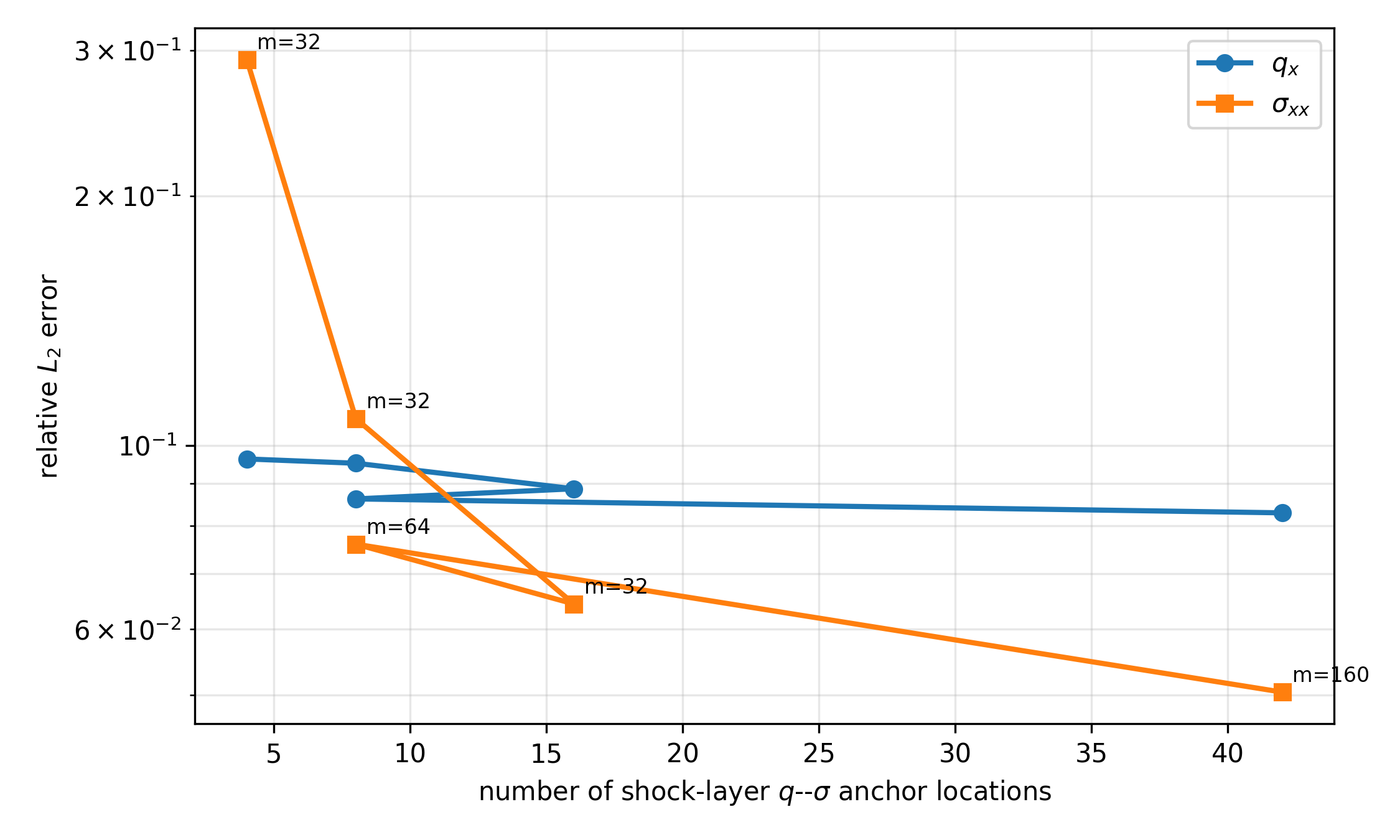}
\caption{Data-efficiency of joint $q$--$\sigma$ anchoring. The best low-data setting, qsig16/m32, uses 16 shock-layer locations for the nonequilibrium moment anchors and approaches the denser full42/m160 upper-bound. This plot supports the interpretation of the method as sparse moment regularisation rather than dense supervised learning.}
\label{fig:data_efficiency}
\end{figure}

The best low-data configuration is qsig16/m32. It preserves the qualitative structure of all five quantities and reduces the stress error by almost an order of magnitude relative to qsig4/m32, as summarised by the data-efficiency trend in \cref{fig:data_efficiency}. This identifies qsig16/m32 as the minimal-data setting that keeps the stress error below the $10^{-1}$ level; qsig8/m32 remains a lower-cost reference with a larger stress error.

\section{Discussion}
\label{sec:discussion}

The normal-shock test identifies the mechanism controlling closure accuracy.  Positivity of the represented distribution and accuracy of the low-order profiles do not determine every velocity projection of the nonequilibrium tail.  The compact positive model recovers the primary shock fields but leaves the fourth-order projection under-observed.  The refined DVM shows that this projection is governed by the relative weight of sign-changing tail lobes in \(H_R(\mathbf c)[f-M_d]\).  The closure head supplies a degree of freedom aligned with that projection and thereby converts a weakly observed moment into an observed closure variable.

The shock-tube ablations establish the hierarchy of observables used throughout the paper.  Macroscopic accuracy concerns $\rho$, $u_x$, and $T$; low-order nonequilibrium accuracy concerns $q_x$ and $\sigma_{xx}$; closure accuracy concerns velocity-tail-weighted moments such as $m_{xxx}^{cl}$ and $R_{xx}^{cl}$.  Residual-only or macro-only training gives plausible equilibrium-like profiles but fails in $q_x$ or $\sigma_{xx}$, and single-moment anchoring corrects only the corresponding mode.  Joint sparse moment anchoring acts as a physical regularisation of the nonequilibrium subspace and sets up the more stringent closure test in the stationary shock.

The stationary normal shock supplies the closure-level result. A flux-locked positive PINN recovers the five primary profiles and the BGK production terms while leaving the fourth-order closure outside the observed subspace. The DVM velocity-space diagnostics show that $R_{xx}^{cl}$ is controlled by a sign-changing tail contribution, and the signed-cancellation diagnostic in \cref{app:signed_cancellation} quantifies this mechanism: $\sigma_{xx}$ is nearly sign-definite on its active support, whereas $q_x$, $m_{xxx}^{cl}$ and $R_{xx}^{cl}$ are residuals of larger positive and negative contributions.  Lower-moment agreement therefore constrains only part of $f-M_d$; the $R$-sensitive kernel requires a projection aligned with the closure itself. The common-initialisation no-$f$ ablation and the pruning test in \cref{app:rhead_pruning} show that the added head acts as an $R$-specific closure correction: removing its centres degrades $R_{xx}^{cl}$ while leaving the unmodified $m_{xxx}^{cl}$ control unchanged.

\subsection{Implications for neural kinetic closures and moment models}

The practical implication is that neural kinetic solvers should be audited by the observables they are intended to deliver.  Density, velocity, temperature and residual norms certify only part of the kinetic state; closure moments require the loss to observe the corresponding projection of $f-M$. A model can be accurate in $\rho,u,T,q_x$, and $\sigma_{xx}$ while misplacing $R_{xx}^{cl}$ because the latter depends on an anisotropic, sign-changing, tail-weighted cancellation. This is the same conceptual difficulty that motivates classical moment hierarchies.  In R13-type descriptions, $R_{ij}$ is the fourth-order closure flux entering the heat-flux equation, and in R26-type descriptions it becomes a transported closure variable.  The present neural result gives a computational analogue of that lesson: closure accuracy requires the projection that transports heat-flux information to be observable.

This perspective also clarifies the relation to acceleration-oriented BGK PINNs and structure-preserving neural kinetic methods. Separable PINNs and weighted residual formulations address cost and conditioning \citep{Ko2026WeightedBGK, Oh2025SPINNBGK}, while asymptotic- or structure-preserving networks address the preservation of kinetic limits \citep{ChenDimarcoPareschi2025SAPNN}. The present work addresses a complementary issue: which moment subspaces are controlled by the finite training information? The answer is task-dependent. For lower-order shock-tube validation, sparse joint $q_x$--$\sigma_{xx}$ anchoring stabilises the primary nonequilibrium layer. For fourth-order closure recovery in the normal shock, a scalar degree of freedom aligned with the $R$ kernel makes the missing projection observable. The decisive ingredient is physical alignment between the claimed closure and the information supplied to the model.

The corrected error level, $1.12\times10^{-1}$, quantifies the strength of the mechanism.  The closure error falls by almost an order of magnitude relative to compact positive variants that leave $R_{xx}^{cl}$ at order unity, while the lower-order moments are preserved.  For downstream moment-closure calculations, the required tolerance will depend on how sensitively the macroscopic model uses $R_{xx}^{cl}$.  The robust conclusion is the observability one: the lower-moment loss leaves the fourth-order projection weakly constrained until a kernel-aligned degree of freedom is supplied.

The computational scale reflects this role as a closure audit.  The refined stationary DVM reference contains $1600\times35017\simeq5.60\times10^7$ phase-space values and was used to compute conservative Maxwellian plateaus, velocity-space diagnostics and validation errors.  The neural closure-stage model uses sparse integrated information: the shock-tube runs use 42 moment anchors and 160 macroscopic locks in the full setting, while the stationary-shock closure head adds 17 scalar coefficients aligned with the $R$ projection.  The relevant cost measure in this paper is therefore the amount of physically targeted information needed to observe a closure moment.  Wall-clock acceleration belongs to the next design stage, where the closure basis is made transferable across Mach number, geometry and collision model.

\subsection{Scope and limitations}

The study isolates the mechanism in one-dimensional BGK shocks with a Mach-2 stationary normal shock, deterministic DVM references, and a prescribed shock-local closure head.  This controlled setting exposes the projection responsible for the closure error.  The BGK relaxation operator gives a single relaxation pathway to the local Maxwellian; the full Boltzmann operator separates moment relaxation through collision-dependent production terms.  Benchmark shock studies have shown that models with similar density, velocity, temperature, stress and heat-flux profiles can differ substantially in higher-order moments because the relaxation rates and nonlinear production terms of $m_{ijk}$ and $R_{ij}$ differ across kinetic models.  The signed-cancellation mechanism identified here is therefore geometric in velocity space, but the dynamically correct closure basis for the Boltzmann equation would have to encode the relevant collision production directions as well as the tail support.  Extending the same observability audit to full Boltzmann collisions, gas mixtures, multidimensional shock interactions, shock--boundary-layer interaction, and stochastic DSMC probes will require closure bases that are learned or transported across flow regimes.  The present result provides the reference mechanism for those extensions: closure recovery must be tested in the moment subspace that the downstream model intends to use.

This interpretation gives the fourth-order test a practical role.  The observable \(R_{xx}^{cl}\) certifies whether the neural kinetic representation contains the higher-order velocity information required by downstream closure models.  Its failure and recovery test whether the PINN has learned the tail-sensitive part of the distribution that transports stress and heat flux in higher-order moment descriptions.

Future work should turn the present diagnostic into a more general design principle.  The fixed shock-local centres used by the present \(R\)-head should be viewed as an auditable closure-observability probe rather than as the final form of an adaptive algorithm.  A natural next step is to learn the closure-aligned basis, or to place it adaptively, instead of prescribing it.  Recent PINN studies of turbulent \(k\)-\(\omega\) models provide a useful analogy: uniformly or randomly distributed points can recover smoother mean fields while failing to resolve stiff closure variables such as the specific dissipation rate, whereas gradient-, residual- and error-aware point selection concentrates constraints in the physically sensitive regions.  For the present kinetic problem, the analogous acquisition score would combine the \(R\)-weighted BGK residual, shock-local gradients and, when available, sparse closure-error indicators to place \(R\)-level probes where the fourth-order tail cancellation is most weakly observed.

The physics-only ablation in \cref{app:rweighted_residual} is a first step in this direction.  It shows that projecting the BGK residual onto the \(R\)-sensitive velocity kernel slightly improves \(R_{xx}^{cl}\), but does not replace sparse \(R\)-level information for the compact positive ansatz used here.  This suggests that future adaptive strategies should not rely on residual weighting alone; they should use the residual as a guide for selecting where closure-level information is needed.  Coupling such adaptive closure probes to stochastic particle data, DSMC sampling, or experimental sparse measurements would provide a route toward transferable closure-aware neural kinetic models.

A second, more structural direction is to introduce \(R_{xx}\) as an explicit unknown in a heat-flux moment equation, yielding an R13/R26-informed neural moment system.  That route is deliberately left outside the present paper because it no longer solves only the BGK kinetic residual; it changes the model class from a BGK--PINN with closure probes to a higher-order moment solver.  A further extension is to examine whether the same tail-observability mechanism appears in full Boltzmann, Shakhov, ES-BGK, and mixture collision models, where the high-order production terms are collision-model dependent.

Finally, robustness under noisy or experimental sparse probes remains an important practical question.  Probabilistic and Bayesian neural fluid models provide natural uncertainty diagnostics for such tests \citep{Maulik2020ProbabilisticNN,SunWang2020BayesianNN,Morimoto2022GSWA}, while recent studies of PINN-based experimental reconstruction and observable-augmented data fusion suggest a route for coupling kinetic closures to measured flow fields \citep{Cai2021EspressoPINN,Zhu2024StratifiedPINN,ZhangChenChen2024FPDE,FukamiTaira2025ObservableManifold}.  These extensions will determine whether the present scalar shock-local correction can be replaced by a transferable neural closure architecture.  The current paper supplies the mechanistic foundation for that broader programme.

\section{Conclusions}
\label{sec:conclusion}

A positive macro--micro neural solver with sparse moment observability constraints has been developed for BGK shock waves. The distribution is represented as a Maxwellian multiplied by a bounded exponential correction, preserving positivity while exposing heat-flux and stress modes. Independent DVM references are used for validation; the stationary normal shock uses a refined conservative DVM with discrete Maxwellians so that the finite velocity grid exactly represents the upstream and downstream Rankine--Hugoniot plateaus.

Across shock-tube cases, sparse joint $q_x$--$\sigma_{xx}$ anchoring stabilises the nonequilibrium moments, whereas residual-only, macro-only, and single-moment variants fail in distinct ways. For the stationary Mach-2 normal shock, the compact flux-locked PINN recovers the primary shock moments while leaving the fourth-order closure $R_{xx}^{cl}$ under-resolved. DVM velocity-distribution diagnostics show that this closure is controlled by the tail-weighted, sign-changing structure of $H_R(c)[f-M_d]$. A shock-local closure head reduces the $R_{xx}^{cl}$ relative error to $1.12\times10^{-1}$ while maintaining $O(10^{-3})$ errors in the macroscopic variables and $O(10^{-1})$ errors in the nonequilibrium and closure-level quantities. Common-initialisation closure-stage ablations show that the direct distribution-function probe loss is diagnostic: the closure accuracy is retained when that term is removed.

The main conclusion is that closure recovery in neural kinetic shock calculations is an observability problem. Residuals and low-order moments certify the projections they see; tail-sensitive fourth-order closures require information aligned with their velocity kernels. The decisive ingredient is a sparse, physically aligned closure degree of freedom. The prescribed closure head used here should be read as a controlled measurement device for this missing projection; the design target for future solvers is an adaptive closure basis that learns the same projection across Mach number, collision model and geometry. The appendices are ordered in the sequence in which they support the argument: the DVM-verification tests in \cref{app:dvm_verification} show that the reference uncertainty for \(R_{xx}^{cl}\) is smaller than the final neural closure error; the pruning and signed-cancellation tests in \cref{app:rhead_pruning,app:signed_cancellation} document, respectively, the active role of the shock-local \(R\) correction and the velocity-space cancellation mechanism; the \(R\)-weighted residual ablation in \cref{app:rweighted_residual} shows that a physics-only weighted projection of the BGK residual does not replace sparse \(R\)-level information; and the scalar-excess diagnostic in \cref{app:delta} shows that the obstruction is anisotropic, sign-changing tail weighting rather than polynomial order alone.

\appendix

\section{Verification of the discrete-velocity references}
\label[appsec]{app:dvm_verification}

The discrete-velocity references were verified independently of the PINN calculations.  The most important checks are the Rankine--Hugoniot plateau states for the Mach-2 monatomic shock, shock-frame conservation, and grid/velocity-domain convergence of the DVM profiles used as validation data.  The external benchmark context is the Mach-2 kinetic-model study of \citet{FeiLiuLiuZhang2020ShockBenchmark}, who showed that high-order shock moments beyond shear stress and heat flux, including $m_{xxx}$ and $R_{xx}$, are needed to distinguish kinetic models in the transition regime.  We therefore use the same class of diagnostics--macroscopic profiles, nonequilibrium moments, and high-order closures--when verifying the DVM reference.

\begin{center}
\begin{minipage}{0.98\textwidth}
\centering
\captionof{table}{Representative DVM grid and velocity-domain refinement check.  Relative differences are computed against the finest member of this sequence, the $40\times14\times14$ velocity grid.  The table is used as an internal discretisation verification; the refined stationary normal-shock reference used in the main text is generated with a still finer $97\times19\times19$ velocity grid and $N_x=1600$ spatial cells.}
\label{tab:dvm_verification}
\footnotesize
\begin{adjustbox}{max width=\textwidth}
\begin{tabular}{lccccc}
\toprule
DVM grid compared with $40\times14\times14$ & $\rho$ & $u_x$ & $T$ & $q_x$ & $\sigma_{xx}$\\
\midrule
$24\times10\times10$ & $1.52\times10^{-3}$ & $4.85\times10^{-3}$ & $1.66\times10^{-2}$ & $4.79\times10^{-2}$ & $5.04\times10^{-1}$\\
$32\times12\times12$ & $7.90\times10^{-5}$ & $1.60\times10^{-3}$ & $9.37\times10^{-4}$ & $5.28\times10^{-3}$ & $1.69\times10^{-2}$\\
\bottomrule
\end{tabular}
\end{adjustbox}
\end{minipage}
\end{center}

\begin{center}
\begin{minipage}{0.98\textwidth}
\centering
\captionof{table}{Closure-level DVM refinement for the Mach-2 stationary normal-shock reference.  Relative $L^2$ differences are computed against the refined $97\times19\times19$ velocity-grid result.  The light run uses the same spatial interval and post-processing definitions but a less resolved velocity quadrature.  The final production and validation runs on independent GPU partitions were indistinguishable at the displayed precision, so only the non-trivial light-to-refined comparison is tabulated.}
\label{tab:dvm_mR_convergence}
\footnotesize
\begin{adjustbox}{max width=\textwidth}
\begin{tabular}{lccccccc}
\toprule
Comparison & $\rho$ & $u_x$ & $T$ & $q_x$ & $\sigma_{xx}$ & $m_{xxx}^{cl}$ & $R_{xx}^{cl}$\\
\midrule
light97 vs refined97 & $4.96\times10^{-4}$ & $8.66\times10^{-4}$ & $2.04\times10^{-3}$ & $3.82\times10^{-3}$ & $7.84\times10^{-2}$ & $2.75\times10^{-2}$ & $1.90\times10^{-1}$\\
\bottomrule
\end{tabular}
\end{adjustbox}
\end{minipage}
\end{center}

\begin{center}
\begin{minipage}{0.98\textwidth}
\centering
\captionof{table}{Spatial reference-certification sweep for the stationary Mach-2 normal shock.  Relative $L^2$ differences are computed against the $N_x=1600$ reference at the same velocity grid, $97\times19\times19$, and $v_{\max}=12$.  For $m_{xxx}^{cl}$ and $R_{xx}^{cl}$, active-support errors are reported because the closure moments vanish in the Maxwellian plateaus.}
\label{tab:dvm_nx_refcert}
\footnotesize
\begin{adjustbox}{max width=\textwidth}
\begin{tabular}{lcccccc}
\toprule
Case & $q_x$ full & $\sigma_{xx}$ full & $m_{xxx}^{cl}$ full & $m_{xxx}^{cl}$ active & $R_{xx}^{cl}$ full & $R_{xx}^{cl}$ active\\
\midrule
$N_x=800$  & $1.22\times10^{-2}$ & $1.19\times10^{-2}$ & $1.96\times10^{-2}$ & $1.96\times10^{-2}$ & $2.08\times10^{-2}$ & $2.08\times10^{-2}$\\
$N_x=1200$ & $5.48\times10^{-3}$ & $5.16\times10^{-3}$ & $8.68\times10^{-3}$ & $8.67\times10^{-3}$ & $9.38\times10^{-3}$ & $9.37\times10^{-3}$\\
\bottomrule
\end{tabular}
\end{adjustbox}
\end{minipage}
\end{center}

\begin{center}
\begin{minipage}{0.98\textwidth}
\centering
\captionof{table}{Velocity-domain and quadrature sensitivity for the stationary Mach-2 normal shock.  Relative $L^2$ differences are computed against the $v_{\max}=14$ calculation using the same tensor-grid size, $97\times19\times19$, and $N_x=1600$.  This is therefore a combined velocity-domain and quadrature-sensitivity check rather than a pure truncation test.  Active-support errors are reported for $m_{xxx}^{cl}$ and $R_{xx}^{cl}$.}
\label{tab:dvm_vmax_refcert}
\footnotesize
\begin{adjustbox}{max width=\textwidth}
\begin{tabular}{lcccccc}
\toprule
Case & $q_x$ full & $\sigma_{xx}$ full & $m_{xxx}^{cl}$ full & $m_{xxx}^{cl}$ active & $R_{xx}^{cl}$ full & $R_{xx}^{cl}$ active\\
\midrule
$v_{\max}=10$ & $3.85\times10^{-3}$ & $2.44\times10^{-2}$ & $8.98\times10^{-3}$ & $8.86\times10^{-3}$ & $5.89\times10^{-2}$ & $2.86\times10^{-2}$\\
$v_{\max}=12$ & $2.74\times10^{-3}$ & $2.25\times10^{-2}$ & $7.85\times10^{-3}$ & $7.73\times10^{-3}$ & $5.48\times10^{-2}$ & $2.69\times10^{-2}$\\
\bottomrule
\end{tabular}
\end{adjustbox}
\end{minipage}
\end{center}

\Cref{tab:dvm_verification} shows that the refinement from $24\times10\times10$ to $32\times12\times12$ reduces the relative differences in all reported fields by more than an order of magnitude for $\rho$, $T$, $q_x$ and $\sigma_{xx}$, and to below $2\times10^{-3}$ for $u_x$.  The stress is the most sensitive quantity on the coarsest grid, but its difference decreases to $1.69\times10^{-2}$ on the intermediate grid.  The closure-level check in \cref{tab:dvm_mR_convergence} extends this verification to $m_{xxx}^{cl}$ and $R_{xx}^{cl}$ using a light-to-refined velocity-grid comparison.  The final reference-certification sweeps in \cref{tab:dvm_nx_refcert,tab:dvm_vmax_refcert} then target the stationary-shock reference used in the main text.  Spatial refinement from $N_x=1200$ to $1600$ gives active-support differences of $8.67\times10^{-3}$ for $m_{xxx}^{cl}$ and $9.37\times10^{-3}$ for $R_{xx}^{cl}$.  The velocity-domain sensitivity is larger, as expected for the fourth-order tail-weighted observable, but the active-support difference between $v_{\max}=12$ and $v_{\max}=14$ remains $2.69\times10^{-2}$ for $R_{xx}^{cl}$.  These reference differences are well below the final closure-head error reported in the main text, so the observed $R_{xx}^{cl}$ discrepancy is not an artefact of an under-resolved DVM reference.  The full DVM phase-space field is reserved for verification and diagnostic post-processing; neural training uses residuals, boundary conditions, and sparse integrated moment information.

\section{Post-hoc pruning sensitivity of the shock-local closure head}
\label[appsec]{app:rhead_pruning}

This appendix measures how the recovered fourth-order closure depends on the spatial resolution of the shock-local $R$ head.  Starting from the successful 17-centre checkpoint used for the stationary normal-shock result, we remove selected closure-head centres and their associated coefficients without retraining, and then re-evaluate the profiles on the full DVM grid.  Because the pruning acts only on the additive scalar correction to $R_{xx}^{cl}$, the lower-order moments and $m_{xxx}^{cl}$ remain unchanged to plotting accuracy; the diagnostic is the degradation of $R_{xx}^{cl}$.

\begin{center}
\begin{minipage}{0.98\textwidth}
\centering
\captionof{table}{Post-hoc pruning sensitivity of the shock-local \(R\)-closure head.  All cases are obtained from the same successful \(17\)-centre checkpoint without retraining; only the retained \(R\)-head centres are changed.  The \(m_{xxx}^{cl}\) column is included as an unchanged control: it is computed from the unmodified positive kinetic representation and is not acted on by the \(R\)-head pruning.  The monotone degradation of \(R_{xx}^{cl}\), while \(m_{xxx}^{cl}\) remains fixed, shows that the pruning specifically affects the fourth-order closure correction.}
\label{tab:rhead_pruning}
\footnotesize
\begin{adjustbox}{max width=\textwidth}
\begin{tabular}{lccc}
\toprule
Case & Retained centres & rel. $L^2$ error in $R_{xx}^{cl}$ & rel. \(L^2\) error in \(m_{xxx}^{cl}\) (control)\\
\midrule
R17 & 17 & $1.20\times10^{-1}$ & $1.18\times10^{-1}$\\
R13 & 13 & $3.02\times10^{-1}$ & $1.18\times10^{-1}$\\
R11 & 11 & $3.43\times10^{-1}$ & $1.18\times10^{-1}$\\
R09 & 9  & $3.79\times10^{-1}$ & $1.18\times10^{-1}$\\
R07 & 7  & $4.96\times10^{-1}$ & $1.18\times10^{-1}$\\
R05 & 5  & $5.71\times10^{-1}$ & $1.18\times10^{-1}$\\
\bottomrule
\end{tabular}
\end{adjustbox}
\end{minipage}
\end{center}

\Cref{tab:rhead_pruning} shows that the full 17-centre head gives the $R_{xx}^{cl}$ accuracy reported in the main text, whereas removing centres causes a systematic increase in the full-profile $R_{xx}^{cl}$ error.  The spatial distribution of the centres therefore matters for resolving the signed tail-weighted cancellation in the shock layer.  The invariant $m_{xxx}^{cl}$ column is the control: it is computed from the unmodified positive kinetic representation and is unaffected by pruning an $R$-specific additive head.  Accurate recovery of $R_{xx}^{cl}$ requires enough shock-layer degrees of freedom to resolve the positive and negative lobes identified in \cref{fig:dvm_tail_diagnostics}; the 17-centre head should therefore be read as a calibrated shock-local closure observable for this Mach-2 BGK normal shock.

\section{Signed-cancellation diagnostic for moment observability}
\label[appsec]{app:signed_cancellation}

The velocity-space diagnostic in \cref{fig:dvm_tail_diagnostics} suggests that the difficult closure quantities are controlled by signed tail contributions.  To quantify this point, we post-process the stored full DVM distribution and define, for a moment kernel $\phi(\bc)$,
\begin{equation}
    \mathcal{C}_{\phi}(x)=
    \frac{\int |\phi(\bc)\,[f-M](x,\bc)|\,\dd\bv}
    {\left|\int \phi(\bc)\,[f-M](x,\bc)\,\dd\bv\right|+\epsilon}.
    \label{eq:signed_cancellation_ratio}
\end{equation}
Here $M$ is the local Maxwellian evaluated on the same velocity grid for this diagnostic, and $\epsilon$ is a small numerical floor used only to avoid division by zero.  The ratio is reported on the active support of each observable, rather than in the Maxwellian plateaus, because the denominator necessarily vanishes where the moment itself vanishes.  Values close to unity indicate that the signed moment is not dominated by cancellation.  Larger values indicate that the observable is obtained as a residual of larger positive and negative velocity-space contributions.

\begin{center}
\begin{minipage}{0.98\textwidth}
\centering
\captionof{table}{Signed-cancellation diagnostic for moment observability in the refined DVM shock.  The ratio is the unsigned velocity-space contribution divided by the magnitude of the signed moment.  Values are reported only on the active support of each observable to avoid plateau blow-up near zero moment values.}
\label{tab:signed_cancellation_metric}
\footnotesize
\begin{adjustbox}{max width=\textwidth}
\begin{tabular}{lccc}
\toprule
Observable & cancellation at peak & median active cancellation & 95th percentile active cancellation\\
\midrule
$\sigma_{xx}$ & $1.09$ & $1.15$ & $1.29$\\
$q_x$ & $1.92$ & $2.51$ & $4.46$\\
$m_{xxx}^{cl}$ & $2.03$ & $1.64$ & $6.42$\\
$R_{xx}^{cl}$ & $1.42$ & $2.31$ & $5.39$\\
\bottomrule
\end{tabular}
\end{adjustbox}
\end{minipage}
\end{center}

\Cref{tab:signed_cancellation_metric} shows that the stress $\sigma_{xx}$ has a cancellation ratio close to unity on its active support, whereas $q_x$, $m_{xxx}^{cl}$ and $R_{xx}^{cl}$ have substantially larger active-support ratios.  The diagnostic places $R_{xx}^{cl}$ in the cancellation-dominated group of tail-weighted observables.  This supports the interpretation used in the main text: closure-level quantities are not merely higher-degree analogues of lower moments, but signed residuals of velocity-space contributions that can be weakly observed by losses built only from macroscopic and low-order projections.

\section{Physics-only \texorpdfstring{$R$}{R}-weighted residual ablation}
\label[appsec]{app:rweighted_residual}

The RANS analogy introduced in the main text raises a natural question: can the missing closure projection be made observable without direct \(R\)-level data by adding a physics channel in the loss?  To test the closest analogue within the present BGK formulation, we remove the direct \(R\)-level losses and the distribution-probe losses, retain the lower-moment and flux constraints, and add only an \(R\)-weighted projection of the same steady BGK residual,
\begin{equation}
    \mathcal{L}_{R\mathrm{phys}}
    =\frac{1}{N_c}\sum_{i=1}^{N_c}
    \left[
    \frac{
    \int H_R(\bc)\left(v_x\partial_x f_\theta-\frac{M[f_\theta]-f_\theta}{\Kn_{eff}}\right)\,\dd\bv
    }{S_R+s_R}
    \right]^2 .
    \label{eq:Rphys_residual}
\end{equation}
This term is not an R13 or R26 moment equation.  It does not introduce \(R_{xx}^{cl}\) as an independent transported variable and does not replace the BGK kinetic equation by a moment system.  It is only a weighted projection of the BGK residual onto the fourth-order \(R\)-sensitive velocity kernel, used here as a data-removal ablation.

\begin{center}
\begin{minipage}{0.98\textwidth}
\centering
\captionof{table}{Data-removal ablation with an \(R\)-weighted BGK residual.  All rows remove direct \(R\)-level anchor losses and distribution-function probe losses.  The parameter \(w_{R\mathrm{phys}}\) multiplies \(\mathcal{L}_{R\mathrm{phys}}\) in \eqref{eq:Rphys_residual}.  The successful sparse-\(R\)-anchor closure-head result from the main text is included as a reference scale.}
\label{tab:Rphys_ablation}
\footnotesize
\begin{adjustbox}{max width=\textwidth}
\begin{tabular}{lcccccc}
\toprule
Case & \(w_{R\mathrm{phys}}\) & \(q_x\) & \(\sigma_{xx}\) & \(m_{xxx}^{cl}\) & \(R_{xx}^{cl}\) & Information used for \(R_{xx}^{cl}\)\\
\midrule
No \(R\)-data control & 0  & \(1.96\times10^{-2}\) & \(4.96\times10^{-2}\) & \(1.02\times10^{-1}\) & \(9.16\times10^{-1}\) & none\\
\(R\)-weighted residual & 5  & \(2.30\times10^{-2}\) & \(5.23\times10^{-2}\) & \(1.03\times10^{-1}\) & \(9.07\times10^{-1}\) & BGK-residual projection\\
\(R\)-weighted residual & 10 & \(2.53\times10^{-2}\) & \(5.34\times10^{-2}\) & \(1.05\times10^{-1}\) & \(9.06\times10^{-1}\) & BGK-residual projection\\
\(R\)-weighted residual & 20 & \(2.76\times10^{-2}\) & \(5.41\times10^{-2}\) & \(1.06\times10^{-1}\) & \(9.05\times10^{-1}\) & BGK-residual projection\\
Sparse \(R\)-closure head & -- & \(7.14\times10^{-2}\) & \(9.40\times10^{-2}\) & \(1.17\times10^{-1}\) & \(1.12\times10^{-1}\) & sparse scalar \(R\) probes\\
\bottomrule
\end{tabular}
\end{adjustbox}
\end{minipage}
\end{center}

\Cref{tab:Rphys_ablation} shows that the \(R\)-weighted residual has only a marginal effect on the fourth-order closure error.  Increasing \(w_{R\mathrm{phys}}\) from 5 to 20 reduces the \(R_{xx}^{cl}\) error only from \(9.16\times10^{-1}\) to approximately \(9.05\times10^{-1}\), while the heat-flux, stress and \(m_{xxx}^{cl}\) errors slightly increase.  The resulting profiles retain the primary low-order shock structure but do not reproduce the signed positive--negative lobe structure of \(R_{xx}^{cl}\).  This negative control supports the interpretation used in the main text: for the compact positive BGK ansatz, simply adding a weighted projection of the same kinetic residual does not replace sparse closure-level information.  A formulation in which \(R_{xx}\) is introduced as an explicit closure variable in a heat-flux moment equation would be a different, R13/R26-informed neural moment model rather than the BGK--PINN considered here.

\section{Supplementary diagnostic: scalar fourth-order excess}
\label[appsec]{app:delta}

The stationary-shock result naturally raises the question of whether the difficulty of $R_{xx}^{cl}$ is simply its fourth-order character or the anisotropic tail weighting of its kernel.  To test this point, the refined DVM solution was post-processed for the scalar fourth-order excess
\begin{equation}
    \Delta = \int |\bc|^4 f\,\dd\bv - 15\rho T^2,
\end{equation}
which is the scalar fourth-order excess appearing together with $m_{ijk}$ and $R_{ij}$ in regularised 26-moment (R26) descriptions.  In the R26 hierarchy, the set of extended variables includes $m_{ijk}$, $R_{ij}$, and $\Delta$, and the next higher moments $\phi_{ijkl}$, $\psi_{ijk}$, and $\Omega_i$ provide transport mechanisms for these quantities \citep{GuEmerson2009R26}.  The aim here is not to construct a full R26 neural solver, but to determine whether a scalar fourth-order observable behaves like the anisotropic tensorial closure. \Cref{fig:delta_ablation} provides this control test by comparing the DVM--BGK reference and the PINN scalar-head recovery of $\Delta$, and by reporting the associated anchor-efficiency trend.

\begin{center}
\includegraphics[width=0.98\textwidth]{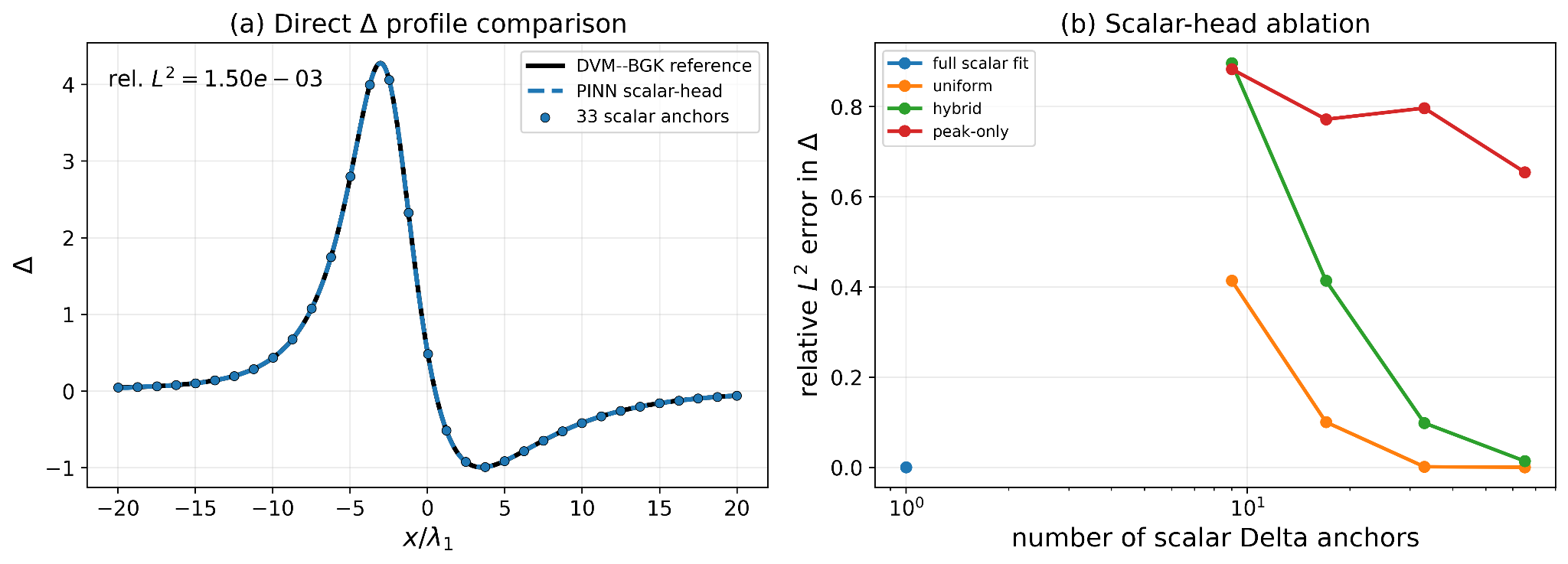}
\captionof{figure}{Supplementary scalar fourth-order excess diagnostic.  (a) Direct comparison between the DVM--BGK reference and the PINN scalar-head recovery of the scalar fourth-order excess $\Delta$ across the stationary Mach-2 normal shock.  The markers show the 33 uniformly distributed scalar anchors used for this recovery.  (b) Relative $L^2$ error of the scalar-head recovery versus the number and strategy of scalar anchors.  The accurate recovery of $\Delta$ from a small number of uniformly distributed anchors shows that polynomial order alone is not the obstruction; the more difficult observable is the anisotropic, sign-changing, tail-weighted tensorial closure $R_{xx}^{cl}$.}
\label{fig:delta_ablation}
\end{center}

\begin{center}
\captionof{table}{Supplementary ablation for the scalar fourth-order excess $\Delta$.  The ``all'' case is a full scalar fit and is included only as a fit-limit reference.}
\label{tab:delta_ablation}
\footnotesize
\begin{adjustbox}{max width=\textwidth}
\begin{tabular}{lcc}
\toprule
Anchor strategy & number of scalar anchors & rel. $L^2$ error in $\Delta$\\
\midrule
Full scalar fit & all grid points & $3.07\times10^{-4}$\\
Uniform & 17 & $1.01\times10^{-1}$\\
Uniform & 33 & $1.50\times10^{-3}$\\
Uniform & 65 & $3.15\times10^{-4}$\\
Hybrid amplitude/gradient/uniform & 33 & $9.85\times10^{-2}$\\
Hybrid amplitude/gradient/uniform & 65 & $1.40\times10^{-2}$\\
Peak-only & 65 & $6.55\times10^{-1}$\\
\bottomrule
\end{tabular}
\end{adjustbox}
\end{center}

Together, \cref{fig:delta_ablation,tab:delta_ablation} support the interpretation used in the main text.  The scalar excess $\Delta$ is recoverable with a small number of spatially distributed scalar anchors; for example, 33 uniformly distributed anchors reduce the error to $1.50\times10^{-3}$.  By contrast, peak-only anchors remain inaccurate even with 65 samples because the scalar observable depends on the shock relaxation wings as well as on the peak.  The anisotropic tensorial closure $R_{xx}^{cl}$ is therefore not difficult merely because it is fourth order.  Its difficulty lies in the sign-changing, anisotropic, tail-weighted projection shown in \cref{fig:dvm_tail_diagnostics}.  The 33-anchor case is a one-dimensional support-resolution diagnostic.  In multidimensional shocks or shock--boundary-layer interactions, the analogous probes should live on low-dimensional nonequilibrium structures--for example shock-normal coordinates, gradient-aligned coordinates, or invariant features of the local distribution--so that the support of the signed closure projection is resolved without sampling the full physical domain densely.

\section*{Funding}
This research received no specific grant from any funding agency, commercial or not-for-profit sectors.

\section*{Declaration of interests}
The author reports no conflict of interest.

\section*{Author contributions}
E.R. conceived the study, developed the numerical methodology, performed the DVM and neural-solver calculations, analysed the results, prepared the figures, and wrote the manuscript.

\section*{Data availability}
The source archive contains the \LaTeX{} source, bibliography, plotted figures, and supporting tables/scripts used for the manuscript. A cleaned public repository containing the DVM solver, PINN training scripts, processed reference data, and checkpoint hashes will be released with the final version.

\bibliographystyle{plainnat}
\bibliography{references}

\end{document}